\newtheorem{lemma}{Lemma}
\newtheorem{thm}{\bf Theorem}
\newtheorem{assumption}{\bf Assumption}
\theoremstyle{remark}
\theoremstyle{remark}\newtheorem{remark}{Remark}
\begin{document}


\title{\Large \bf Self-triggered Consensus Control of 
	Multi-agent Systems from Data}

\author{Yifei Li, Xin Wang, Jian Sun,~\IEEEmembership{Senior Member,~IEEE},\\ Gang Wang,~\IEEEmembership{Senior Member,~IEEE},
	and
	Jie Chen,~\IEEEmembership{Fellow,~IEEE}
\thanks{The work was partially supported by the National Key R\&D Program of China under Grant 2021YFB1714800, and by the National Natural Science Foundation of China under Grants 62173034, 61925303, 62088101, U20B2073.
}
	\thanks{Yifei Li, Xin Wang, Jian Sun, and Gang Wang are with the National Key Lab of Autonomous Intelligent Unmanned Systems, Beijing Institute of Technology, Beijing 100081, China, and with the Beijing Institute of Technology Chongqing Innovation Center, Chongqing 401120, China (e-mail: liyifei@bit.edu.cn; xinwang@bit.edu.cn; sunjian@bit.edu.cn; gangwang@bit.edu.cn).
		
	
	Jie Chen is with the Department of Control Science and Engineering, Tongji University, Shanghai 201804, China, and with  the National Key Lab of Autonomous Intelligent Unmanned Systems, Beijing Institute of Technology, Beijing 100081, China 	
	(e-mail: chenjie@bit.edu.cn).
}
}



\maketitle

\begin{abstract}
This paper considers self-triggered consensus control of unknown linear multi-agent systems (MASs). Self-triggering mechanisms (STMs) are widely used in MASs, thanks to their advantages in avoiding continuous monitoring and saving computing and communication resources. However, existing results require the knowledge of system matrices, which are difficult to obtain in real-world settings. To address this challenge,
we present a data-driven approach to designing STMs for unknown MASs {building upon the model-based solutions}.
Our approach leverages a system lifting method, which allows us to derive a data-driven representation for the MAS. Subsequently, a data-driven self-triggered consensus control (STC) scheme is designed, which combines a data-driven STM with a state feedback control law. We establish a data-based stability criterion for asymptotic consensus of the closed-loop MAS
in terms of linear matrix inequalities, whose solution provides a matrix for the STM as well as a stabilizing controller gain. In the presence of external disturbances, a model-based STC scheme is put forth for $\mathcal{H}_{\infty}$-consensus of MASs, serving as a baseline for the data-driven STC. Numerical tests are conducted to validate the correctness of the data- and model-based STC approaches.
Our data-driven approach demonstrates a superior trade-off between control performance and communication efficiency from finite, noisy data relative to the system identification-based one. 
\end{abstract}

\allowdisplaybreaks

\begin{IEEEkeywords}
Distributed control, self-triggered control, data-driven control, consensus control
\end{IEEEkeywords}

\section{Introduction}
Multi-agent systems (MASs), such as multi-robots, vehicle platoons, and sensor networks, have received remarkable attention from both academia and industry, because of their potential to perform complex tasks collaboratively  \cite{Bi2022,Arslan2019,Ganguli2009,Francescelli2013}.
Consensus is a crucial issue of MASs, which can be classified into leaderless consensus \cite{Olfati2004,Mei2016}, reaching the common state of all agents or leader-following consensus \cite{Li2013,Liu2021}, tracking the state of a leader, which can be a real agent or a virtual one. This paper contributes to the leader-following consensus, aiming at designing distributed control algorithms given only local information of leader-following MASs.

To cope with limited communication bandwidth and energy of agents, an effective energy- and bandwidth-saving approach, termed self-triggered control, has been proposed for networked control systems, see e.g., \cite{Velasco2003,Mazo2009}.
The underlying idea is to design a controller along with a self-triggered transmission policy, a.k.a., self-triggering mechanism (STM), to improve the efficiency of resource utilization and prolong the life-cycle of sensors by sampling and transmitting only when necessary.
Using STM, the next triggering instant is predetermined based on the available information at the current triggering instant.
Since transmission times are scheduled in advance, sensors can be switched to sleep mode during inter-event triggering intervals, thus continuous
 monitoring can be avoided to save energy and computing resources \cite{Heemels2012}.
Recently, {self-triggered control has been widely studied for single-integrator MASs in \cite{Yi2019}, general MASs in \cite{Cui2023},} leader-following consensus under DoS attacks in \cite{Zhang2022self} or with byzantine adversaries in \cite{Zegers2022}, and distributed model predictive control (MPC) in \cite{Mi2020}. 


Note that the above-mentioned self-triggered control studies are all model-based, in the sense that both controllers and STMs are designed based on explicit system models.
Indeed in real-world applications, it is challenging to obtain an accurate system model.
The emerging data-driven control, on the other hand, provides an end-to-end paradigm for direct synthesis and analysis of unknown sampled-data control systems using purely data \cite{Markovsky2021}, without resorting to any intermediate procedure, e.g., identifying a system model \cite{Ljung1986}.
Notably, a cornerstone of data-driven control is the fundamental lemma \cite{Willems2005}, which rigorously characterizes conditions for describing the behavior of a linear time-invariant system by a single trajectory. 
{
While data-driven control and analysis draw insights from model-based derivations, its primary emphasis is on replacing process models with data. 
Up to date, this control paradigm has been extended in both theory as well as application areas,}
 including robust control in \cite{Persis2020,Waarde2022,Liu2022dos}, aperiodic sampled-data control in \cite{Wildhagen2022,Wang2021}, MPC in \cite{Coulson2019,Berberich2021}, quantized control in \cite{Zhao2022data}, and event-triggered control in \cite{Persis2022,Qi2022,Wang2023jas}, to name a few. 
Self-triggered control from data, self-triggered consensus control (STC) in particular, remains relatively less explored. 
{Recently, the data-driven self-triggered control problem of linear systems has been studied in \cite{Wenjie2023,Wang2022self} via trajectory prediction and a behavioral approach, respectively.}

All the aforementioned results on data-driven control have mainly focused on single systems. 
The widespread applications of MASs call for research efforts in distributed data-driven  solutions, that is, 
developing distributed controllers directly from local data.
Indeed, several distributed data-driven control strategies have been developed in the literature for a range of control problems.
Distributed data-driven optimal control is investigated in \cite{Baggio2021}. Assuming known noise, distributed data-driven controllers are developed to achieve output synchronization in \cite{Jiao2021} for heterogeneous MASs. 
Distributed iterative learning control protocols are presented for nonlinear MASs in \cite{Meng2021}.
Nevertheless, all these strategies are based on continuous communication between agents.
We are inspired to take a step toward developing distributed data-driven STC for unknown MASs. 
The difficulty is twofold: i) how to design distributed STM and associated controller for each agent using solely data; and ii) how to obtain a data-driven criterion to guarantee asymptotic consensus of MASs.

This paper aims to design data-driven STC strategies for unknown leader-following MASs using only some state-input data. 
{We begin by establishing a data-driven parametrization for lifted MASs, which allows us to derive data-driven STMs from model-based STMs} to predict the next triggering time without knowing system matrices.
A data-based stability condition is further derived for the closed-loop MAS, in the form of linear matrix inequalities (LMIs), guaranteeing that all followers track the leader asymptotically. 
This condition also provides a solution for designing the matrices of feedback controller and STM using pre-collected noisy state-input data.
Finally, a comparison between data-driven STC, classical model-based STC (that guarantees $\mathcal{H}_{\infty}$-consensus), and system identification-based STC is displayed to validate the effectiveness of the proposed data-driven approach.

In summary, the contributions are as follows.
\begin{enumerate}
\item Building on a data-driven representation of  lifted MASs, a distributed data-driven STM is proposed, where the next triggering time is predetermined using noisy state-input data collected offline and locally.
\item A data-based stability criterion for co-designing the controller gain and triggering matrix is established for unknown MASs under the STM, while achieving asymptotic leader-following consensus.
\item A model-based condition for $\mathcal{H}_{\infty}$-performance analysis and for the existence of distributed STC that ensures a given $\mathcal{H}_{\infty}$-performance level is developed for leader-following MASs.
\end{enumerate}

The remainder of the paper is structured as follows. Some preliminaries and considered problems, including graph theory, model-based STC, and pre-collected data, are introduced in Section \ref{section2}. The data-driven STC design and analysis are given in Section \ref{section3} along with a model-based stability condition for $\mathcal{H}_{\infty}$-consensus control. Section \ref{section4} validates the results by numerical simulations. Section \ref{section5} concludes the paper.

\emph{Notation.} Let $\mathbb{N}$ ($\mathbb{R}$) denote the set of all non-negative integers (real numbers).
 For integers $a< b$, let $\mathbb{N}_{[a,b]}:=\mathbb{N}\cap [a,b]$. 
For a vector $x\in \mathbb{R}^n$, $x>0$ is understood entrywise. {Let $\otimes$ and $*$ denote the Kronecker product and the symmetric part in symmetric matrices.} For a symmetric matrix $P$, $P\succ 0$ ($P\succeq 0$) means that $P$ is positive (semi-)definite. ${\rm{Sym}}\{P\}$ takes the sum of $P^\top$ and $P$. We use $I$ ($\mathbf{0}$) to denote the identity (zero) matrix of appropriate dimensions. {Finally, we write $[\cdot]$ if the elements can be inferred by symmetry.}

\section{Preliminaries and Problem Formulation}
\label{section2}
\subsection{Graph theory}
A weighted graph $\bar{\mathcal{G}}=(\bar{\mathcal{V}}, \bar{\mathcal{E}})$ defined over a  nonempty node set $\bar{\mathcal{V}}=\{v_0, v_1,\ldots, v_N \}$ and an edge set $\bar{\mathcal{E}} \subseteq \bar{\mathcal{V}}\times \bar{\mathcal{V}}$ describes the interaction between agents. 
If node $v_j$ can obtain node $v_i$'s information, then the link $(v_i, v_j)\in \bar{\mathcal{E}}$. 
Let $\mathcal{N}_i=\{j\in \bar{\mathcal{V}}|(i,j)\in \bar{\mathcal{E}}\}$ denote the neighbor set of node $v_i$. 
The induced subgraph $\mathcal{G}=(\mathcal{V},\mathcal{E})$ obtained from $\bar{\mathcal{G}}$ represents the information interaction relationship between followers, where $V= \{v_1,\ldots, v_N\}$ and $\mathcal{E}\subseteq (\mathcal{V}\times\mathcal{V})$. 
The adjacency matrix $\mathcal{A}=[a_{ij}]\in \mathbb{R}^{N\times N}$ has $a_{ii}=0$, $a_{ij}>0$ if $(v_j, v_i)\in \mathcal{E}$, and $a_{ij}=0$, otherwise. 
The Laplacian matrix $\mathcal{L}=[l_{ij}]\in \mathbb{R}^{N\times N}$ associated with the subgraph $\mathcal{G}$ has $l_{ii}=\sum_{j\neq i} {a_{ij}}$ and $l_{ij}=-a_{ij}$.
Let the diagonal matrix $\mathcal{P} = {\rm{diag}}\{a_{10}, \ldots, a_{N0}\}\in \mathbb{R}^{N\times N}$ describe the accessibility of the leader to followers. Precisely, $a_{0i} > 0$ if follower $i$ has access to the information of the leader, and $a_{0i} = 0$ otherwise. In addition, define the matrix $\mathcal{H} := \mathcal{L} + \mathcal{P}$. 
\begin{lemma}[Symmetric positive definite \cite{Zhang2015constructing}]
	If the subgraph $\mathcal{G}$ is undirected and the augmented graph associated with $\mathcal{H}$ is connected, then the matrix $\mathcal{H}$ is symmetric positive definite.
\end{lemma}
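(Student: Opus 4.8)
The statement to prove is that $\mathcal{H} = \mathcal{L} + \mathcal{P}$ is symmetric positive definite when the subgraph $\mathcal{G}$ among followers is undirected and the augmented graph (including the leader) is connected. Let me sketch a proof plan.

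\textbf{Proof plan.}

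The plan is to first observe that since the follower subgraph $\mathcal{G}$ is undirected, its Laplacian $\mathcal{L}$ is symmetric; and since $\mathcal{P} = \mathrm{diag}\{a_{10},\ldots,a_{N0}\}$ is diagonal with nonnegative entries, $\mathcal{H} = \mathcal{L} + \mathcal{P}$ is automatically symmetric. It remains to establish positive definiteness, i.e., $x^\top \mathcal{H} x > 0$ for every nonzero $x \in \mathbb{R}^N$.

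First I would write out the quadratic form explicitly. Using $l_{ii} = \sum_{j \neq i} a_{ij}$, $l_{ij} = -a_{ij}$, one gets the standard identity
$$x^\top \mathcal{L} x = \tfrac{1}{2}\sum_{i=1}^N \sum_{j=1}^N a_{ij}(x_i - x_j)^2,$$
so that
$$x^\top \mathcal{H} x = \tfrac{1}{2}\sum_{i=1}^N \sum_{j=1}^N a_{ij}(x_i - x_j)^2 + \sum_{i=1}^N a_{i0}\, x_i^2 \;\geq\; 0,$$
which already shows $\mathcal{H} \succeq 0$. The key step is then to argue that equality forces $x = 0$. Suppose $x^\top \mathcal{H} x = 0$. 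Then every term vanishes: $a_{ij}(x_i - x_j)^2 = 0$ for all $i,j$, and $a_{i0} x_i^2 = 0$ for all $i$. The first condition means $x_i = x_j$ whenever $i$ and $j$ are adjacent in $\mathcal{G}$; propagating this along paths, $x$ is constant on each connected component of $\mathcal{G}$. The second means $x_i = 0$ for every follower $i$ that is connected to the leader.

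The main obstacle — really the only substantive point — is turning the connectivity hypothesis on the augmented graph into the conclusion $x \equiv 0$. I would argue as follows: since the augmented graph associated with $\mathcal{H}$ is connected, every follower node $i$ is reachable from the leader node $v_0$ via some path. Walk along such a path from $v_0$ to $v_i$; the path enters the follower set at some node $k$ with $a_{k0} > 0$ (so $x_k = 0$), and thereafter moves through edges of $\mathcal{G}$, along each of which $x$ is constant; hence $x_i = x_k = 0$. Since $i$ was arbitrary, $x = 0$, contradicting $x \neq 0$. Therefore $x^\top \mathcal{H} x > 0$ for all $x \neq 0$, i.e., $\mathcal{H} \succ 0$. (Alternatively, one can invoke connectivity of $\mathcal{G}$ together with the existence of at least one follower with leader access on each component, but the path argument handles all cases uniformly.) I expect this graph-reachability argument to be the part deserving the most care; the algebraic manipulations are routine.
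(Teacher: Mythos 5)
Your proof is correct. Note that the paper does not actually prove this lemma---it is imported verbatim from the cited reference \cite{Zhang2015constructing}---so there is no in-paper argument to compare against; your quadratic-form decomposition $x^\top \mathcal{H} x = \tfrac{1}{2}\sum_{i,j} a_{ij}(x_i-x_j)^2 + \sum_i a_{i0}x_i^2$ followed by the reachability argument from the leader is the standard way this result is established in the literature (an equally common alternative is to observe that $\mathcal{H}$ is a symmetric, irreducibly diagonally dominant matrix with at least one strictly dominant row and invoke the M-matrix/Gershgorin argument). The one step deserving the care you flagged---taking a \emph{simple} path from $v_0$ to $v_i$ so that its first edge lands on a follower $k$ with $a_{k0}>0$ and the remainder stays inside $\mathcal{G}$---is handled correctly.
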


\subsection{Distributed self-triggered control}
Consider a linear discrete-time leader-following MAS with  a leader indexed by $0$ and $N$ followers by $1,2,\ldots,N$. For $t\in \mathbb{N}$, the dynamics of the MAS is described by
\begin{equation}
\label{mas}
	\left\{\begin{aligned}{x}_{i}(t+1)&={A}_{\rm tr} x_{i}(t)+{B}_{\rm tr} u_{i}(t), \quad i=1,2,\ldots,N,
		\\ {x}_{0}(t+1)&={A}_{\rm tr} x_{0}(t)\end{aligned}\right.
\end{equation}
where $x_0(t)\in \mathbb{R}^n$ denotes the leader's state, $x_i(t)\in \mathbb{R}^n$ and $u_i(t)\in \mathbb{R}^p$ denote follower $i$'s state and control input, respectively.
We assume that the system matrices ${A}_{\rm tr},{B}_{\rm tr}$ are real, constant, stabilizable, but \emph{unknown}. 

%
We consider the leader-following consensus problem of the unknown MAS \eqref{mas} over a network described by a fixed directed graph $\bar{\mathcal{G}}$.
Our goal is to design self-triggered feedback controllers 
such that the followers can track the leader asymptotically for any given initial state, i.e., $\lim_{t\to \infty}\|x_i(t)-x_0(t)\|=0, \, \forall i=1,2,\ldots,N.$
Before moving on, the following assumption is presented.


\begin{assumption}[Communication topology]
	\label{graph}
 Only a subset of followers have direct access to the leader's information, and the graph $\bar{\mathcal{G}}$ contains a directed spanning tree with the leader as the root. 
\end{assumption}

Let us define the combined measurement variable
\begin{equation}
\label{z}
z_i(t):=\sum_{j=1}^{N} a_{i j}\left({x}_{i}(t)-{x}_{j}(t)\right)+a_{i0}\left({x}_{i}(t)-{x}_{0}(t)\right)
\end{equation}
where $a_{i j}$ denotes the $i j$th entry of the adjacency matrix $\mathcal{A}$. 
We consider the following distributed self-triggered controller for MAS \eqref{mas}
\begin{equation}
\label{controller}
		u_{i}(t)=K z_i(t_k^i),\quad t\in \mathbb{N}_{[t_k^i,\,t_{k+1}^{i}-1]}
\end{equation}
where $K\in \mathbb{R}^{p\times n}$ is the feedback gain matrix to be designed.
Note that each follower $i$ updates the control input \eqref{controller} locally and broadcasts its latest state to neighbors at intermittent instants $\{t_k^i\}_{k=1}^{\infty}$, which are called triggering times, dictated by a self-triggering policy. 
Without loss of generality, we assume that $\bar{x}_0(t)=x_0(t_k^i)=x_0(t)$, that is, the leader does not implement a triggering policy, and each follower $i$ makes a triggered transmission at $t=0$ so that $t_1^i=0$. 
Then, letting $t_k^i$ designate the most recent triggering time, the subsequent triggering time $t_{k+1}^i$  is given by     
\begin{equation}
\label{trigger}
t_{k+1}^i=t_k^i+\inf_{s_k^i\in \mathbb{N}} \left\{s_k^i\geq 1\big|f_i\left(x_i(t_k^i),\,z_i(t_k^i),\, s_k^i\right)\geq0 \right\}
\end{equation}
with the following triggering function 
\begin{equation*}
	f_i\left(x_i(t_k^i),\,z_i(t_k^i),\,s_k^i\right)=e_{i}^{\top} (s_k^i) \Phi e_{i}(s_k^i)-\sigma z_i^{\top}(t_k^i) \Phi z_{i}(t_k^i)
\end{equation*}
where $\Phi\in \mathbb{R}^{n\times n}$ is a positive definite matrix to be designed, $\sigma$ is a positive constant, $s_k^i=t_{k+1}^i-t_k^i$ is the $k$th inter-event triggering interval of agent $i$, and $e_{i}(s_k^i)=x_i(t_k^i+s_k^i)-x_i(t_k^i)$ denotes the error between the last broadcast state at $t_k^i$ and the state at $t_{k+1}^i$. 
Regarding the STM \eqref{trigger}, the following two observations can be made: i) the inter-event time $s_k^i$ can be computed at the current triggering time $t_k^i$, that is, 
each agent decides the next triggering time $t_{k+1}^i=t_k^i+s_k^i$ based on the most recent triggered states from itself and its neighbors; and,
ii) the STM yields asynchronous and intermittent control updates,  transmissions, and state monitoring for all agents.


We define the tracking error $\delta_i(t):=x_i(t)-x_0(t)$ between the leader and each follower $i$. It follows from \eqref{mas} and \eqref{controller} that the dynamics of $\delta_i(t)$ satisfies 
\begin{equation}
\label{delta}
	\delta_i(t+1)={A}_{\rm tr}\delta_i(t)+{B}_{\rm tr}Kz_i(t_k^i), \quad  i=1,2,\ldots,N.
\end{equation}
It becomes self-evident that the leader-following consensus problem of the MAS \eqref{mas} is translated into the stabilization problem of the closed-loop system \eqref{delta}. 

We refer to the controller \eqref{controller} and the STM \eqref{trigger} collectively as a distributed model-based STC law since (pre)determination of communication and control actions requires the precise knowledge of the true system matrices $A_{\rm tr},{B}_{\rm tr}$.
In the literature, a variety of solutions to the model-based STC problem have been provided, see {e.g., \cite{Yi2019,Cui2023,Zhang2022self,Zegers2022,Mi2020}}.
However, when $A_{\rm tr},{B}_{\rm tr}$ are unknown, one cannot apply these results. The challenge left is to design a distributed STC scheme directly from data.
while the data-driven STC rely on the extensively studied model-based STC, the challenge left is to replace process models process with noisy data.

\subsection{Pre-collected data}
\label{section2c}
The absence of matrices ${A}_{\rm tr},{B}_{\rm tr}$ challenges the STC design and associated stability analysis. To address this issue, we collect state-input data $\{x_i(T)\}_{T=0}^{\rho}$, $\{u_i(T)\}_{T=0}^{\rho-1}$  for each follower $i$ across time interval $T\in \{0,1,\ldots,\rho\}$ from the following \emph{perturbed} system
\begin{equation}
\label{mas_noise}
	x_i(T+1)={A}_{\rm tr}x_i(T)+{B}_{\rm tr}u_i(T)+Ew_i(T)
\end{equation}
where $E\in \mathbb{R}^{n\times n_w}$ is a known matrix, {which has full column rank and describes} the influence of \emph{unknown} disturbance $w_i(T)\in\mathbb{R}^{n_w}$ on the system.
{In particular, we assume that we are given noiseless state-input measurements of the leader.
Then, it} follows from \eqref{mas} and the definition of $\delta_i(t)$ that \eqref{mas_noise} can be equivalently converted into the ensuing linear system
\begin{equation}
\label{track}
	\delta_i(T+1)={A}_{\rm tr}\delta_i(T)+{B}_{\rm tr}u_i(T)+Ew_i(T).
\end{equation}

The collected state-input data are corrupted by the \emph{unknown} disturbance $\{w_i(T)\}_{T=0}^{\rho-1}$, which we store in the matrix
$W_i^1 :=\left[w_i(0)\; w_i(1) \;\cdots \; w_i(\rho-1)\right].$
Capitalizing on the noise bound for single systems in \cite{Wang2021,Waarde2022,Wang2022self}, a  quadratic full-block noise bound is presented for the leader-following MASs, along with a general data-driven representation.
\begin{assumption}[Noise bound]
\label{noiseass}
	The noise matrix $W_i$ per follower $i$ belongs to
	\begin{equation}\label{eq:noise}
\mathcal{W}_i=\bigg\{W_i\in \mathbb{R}^{n_w\times\rho} \bigg|\! \left[\begin{matrix}{W_i}^\top \\ I\end{matrix}\right]^\top \!\left[\begin{matrix}Q_{d}& S_{d} \\ *& R_{d}\end{matrix}\right]\!\left[\begin{matrix}{W_i}^\top \\ I\end{matrix} \right] \succeq 0 \bigg\}
\end{equation}
with some known matrices $Q_{d}\prec 0$, $R_{d}=R_{d}^{\top}$, and $S_{d}$ of suitable dimensions. 
\end{assumption}

	\begin{remark}[\emph{Interrelationship between noise models}]
		Assumption \ref{noiseass} provides a comprehensive framework for modeling bounded additive noise.
		Notably, the noise model employed for agent $i$ in Assumption \ref{noiseass} can be regarded as equivalent to the model presented in \cite[Assumption 1]{Waarde2022}, since \eqref{eq:noise} can be transformed into  $ \left[\begin{matrix}I \\ {W_i}^\top \end{matrix}\right]^\top \left[\begin{matrix}R_{d}& S_{d}^\top \\ S_{d}& Q_{d}\end{matrix}\right]\left[\begin{matrix}I \\ {W_i}^\top\end{matrix} \right] \succeq 0$, which	is of the same form as that in \cite{Waarde2022}.
		In addition, the quadratic bound in Assumption \ref{noiseass} can be cast as a specific case of the bound in \cite[Asssumption 2]{Wildhagen2022}.
	\end{remark}

Note that the correspondence between the state-input data $\{x_i(T)\}_{T=0}^{\rho}$, $\{u_i(T)\}_{T=0}^{\rho-1}$ and the true system model is not unique in general, in the sense that there are a family of systems (i.e., matrix pairs) that can generate the collected state-input trajectories,
particularly in light of the fact that the noise sequence corrupting the experiment is unknown.
We compute and collect the tracking error samples $\{\delta_i(T)\}_{T=0}^{\rho}$. 
Based on \eqref{track} and Assumption~\ref{noiseass}, the following result characterizes all such system matrices by a quadratic matrix inequality (QMI).
\begin{lemma}[Data-driven representation of MASs] 
\label{generallem}
Stack the samples $\{\delta_i(T)\}_{T=0}^{\rho}$ and $\{u_i(T)\}_{T=0}^{\rho-1}$ per agent in $\Delta_i :=\left[\delta_i(0) \; \delta_i(1) \; \cdots \; \delta_i(\rho-1)\right],$
$\Delta_{i+} :=\left[\delta_i(1) \; \delta_i(2) \; \cdots \; \delta_i(\rho)\right],$ and $U_i :=\left[u_i(0) \; u_i(1) \; \cdots \; u_i(\rho-1)\right].$
The set $\Sigma_i$ of system matrices interpreting the data of agent $i$ is 
\begin{equation}\label{eq:rep}
	\begin{split}
	\Sigma_i:=\bigg\{[A\; B] \in\, &  \mathbb{R}^{n \times(n+p)} \Big| \\ &\quad \left[\begin{matrix}{[A\; B]^\top } \\ I\end{matrix}\right]^\top  \Theta_i \left[\begin{matrix}{[A\; B]^\top } \\ I\end{matrix}\right] \succeq 0\bigg\}
	\end{split}
\end{equation}
where $\Theta_{i}:=\left[\begin{matrix}-\Delta_i & \mathbf{0} \\ -U_i & \mathbf{0} \\ \hline \Delta_{i+} & E\end{matrix}\right]\left[\begin{matrix}Q_{d} & S_{d} \\ * & R_{d}\end{matrix}\right]\left[\begin{matrix} \cdot\end{matrix} \right]^\top $.
\end{lemma}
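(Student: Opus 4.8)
The plan is to translate the noise constraint on $W_i$ into a constraint on the unknown pair $[A\;B]$ by substituting the data equation. First I would observe that, by \eqref{track}, the collected samples satisfy the matrix identity
\begin{equation*}
\Delta_{i+} = A_{\rm tr}\Delta_i + B_{\rm tr}U_i + E\,W_i^1
\end{equation*}
where $W_i^1 = [w_i(0)\;\cdots\;w_i(\rho-1)]$. Since $E$ has full column rank, this lets me solve for the noise generated by \emph{any} candidate pair $[A\;B]$: a pair $[A\;B]$ is consistent with the data precisely when the residual $\Delta_{i+} - A\Delta_i - BU_i$ lies in the column space of $E$ and the corresponding $W_i$ (uniquely determined because $E$ is injective) lies in the admissible set $\mathcal{W}_i$ of Assumption~\ref{noiseass}. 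The key algebraic step is to rewrite $E\,W_i = \Delta_{i+} - A\Delta_i - BU_i$ in block form as
\begin{equation*}
E\,W_i = \begin{bmatrix}-\Delta_i \\ -U_i \\ \hline \Delta_{i+}\end{bmatrix}^{\!\top}\!\!
\begin{bmatrix}[A\;B]^\top \\ I\end{bmatrix}
\quad\text{suitably padded,}
\end{equation*}
so that $\big[\begin{smallmatrix}W_i^\top\\ I\end{smallmatrix}\big]$ factors through $\big[\begin{smallmatrix}[A\;B]^\top\\ I\end{smallmatrix}\big]$ via the matrix $\big[\begin{smallmatrix}-\Delta_i & \mathbf 0\\ -U_i & \mathbf 0\\ \Delta_{i+} & E\end{smallmatrix}\big]$.

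The main computation is then to plug this factorization into the QMI \eqref{eq:noise}. Writing $N_i := \big[\begin{smallmatrix}-\Delta_i & \mathbf 0\\ -U_i & \mathbf 0\\ \Delta_{i+} & E\end{smallmatrix}\big]$ and $\Pi := \big[\begin{smallmatrix}Q_d & S_d\\ * & R_d\end{smallmatrix}\big]$, one checks directly that
\begin{equation*}
\begin{bmatrix}W_i^\top\\ I\end{bmatrix} = N_i^\top \begin{bmatrix}[A\;B]^\top\\ I\end{bmatrix},
\end{equation*}
by reading off the block rows: the bottom block gives $W_i^\top = (\Delta_{i+}^\top)[A\;B]^\top \cdot(\text{no}) $ — here I need to be careful about which factor carries the $E$, so the bookkeeping has to be done so that $E W_i = \Delta_{i+} - A\Delta_i - BU_i$ comes out correctly, i.e. the $E$ sits in the $I$-column and the data blocks multiply $[A\;B]^\top$. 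Once that identity is verified, substituting into \eqref{eq:noise} gives
\begin{equation*}
\begin{bmatrix}[A\;B]^\top\\ I\end{bmatrix}^{\!\top} N_i\,\Pi\,N_i^\top \begin{bmatrix}[A\;B]^\top\\ I\end{bmatrix} \succeq 0,
\end{equation*}
and $N_i\Pi N_i^\top$ is exactly the matrix $\Theta_i$ defined in the statement. Conversely, if $[A\;B]$ satisfies this QMI, define $W_i$ through $EW_i = \Delta_{i+}-A\Delta_i-BU_i$ (well-posed only if the right-hand side is in $\mathrm{range}(E)$; I would argue that the QMI together with $Q_d\prec 0$ forces this, or else invoke the standing modelling assumption that the true trajectory \emph{does} come from \eqref{track} so the set is nonempty and the residuals are genuinely of the form $EW_i$); then the QMI reduces to $W_i\in\mathcal{W}_i$, so $[A\;B]\in\Sigma_i$.

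The step I expect to be the main obstacle is the $\Leftarrow$ direction, specifically the reconstruction of a valid noise matrix $W_i$ from a pair satisfying the QMI: because $E$ is only assumed to have full column rank (not to be square), the residual $\Delta_{i+}-A\Delta_i-BU_i$ need not a priori lie in $\mathrm{range}(E)$, and one must show that the QMI constraint (leveraging $Q_d\prec0$, which makes the quadratic form coercive in the $W_i$ direction) rules out residuals with a component orthogonal to $\mathrm{range}(E)$ — equivalently, that the parametrization $\Theta_i = N_i\Pi N_i^\top$ already encodes this range condition. Everything else — the block-matrix identity and the substitution — is routine linear algebra. The forward direction ($[A\;B]\in\Sigma_i$ from consistency with data) is immediate from the factorization.
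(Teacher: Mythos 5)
Your proposal follows essentially the same route as the paper's proof: rewrite the data equation \eqref{track} as $EW_i=\Delta_{i+}-(A\Delta_i+BU_i)$, substitute into the noise bound \eqref{eq:noise}, and factor the result as the quadratic form with $\Theta_i=N_i\Pi N_i^\top$. The bookkeeping point you flag is resolved exactly as the paper does it: first conjugate \eqref{eq:noise} by $E$ on the left and $E^\top$ on the right, after which the correct factorization is $\left[\begin{smallmatrix}(EW_i)^\top\\ E^\top\end{smallmatrix}\right]=N_i^\top\left[\begin{smallmatrix}[A\;B]^\top\\ I\end{smallmatrix}\right]$ — your stated identity with $W_i^\top$ stacked over $I$ is not literally correct, since the $E$ must appear in both block rows of the right-hand column of $N_i$'s transpose. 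Where you go beyond the paper is the converse inclusion: the paper proves only that data-consistent pairs satisfy the QMI and then asserts equivalence, whereas you correctly isolate the need to reconstruct an admissible $W_i\in\mathcal{W}_i$ from the residual, and the argument you sketch does close it — testing the QMI against any $v$ with $v^\top E=0$ and using $Q_d\prec 0$ forces each column of $\Delta_{i+}-A\Delta_i-BU_i$ into $\mathrm{range}(E)$, and full column rank of $E$ then lets you cancel $E$ and $E^\top$ to recover \eqref{eq:noise} for $W_i=E^\dagger(\Delta_{i+}-A\Delta_i-BU_i)$. So your version is, if anything, more complete than the paper's on the set-equality claim.
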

	\begin{proof}
		All system matrices $[A\;B]$ that are consistent with the state-input data and the given noise bound satisfy
		$\Delta_{i+}=A\Delta_i+BU_i+EW_i$, which can be rewritten as
		\begin{equation}\label{eq:EW1}
			EW_i=\Delta_{i+}-(A\Delta_i+BU_i).
		\end{equation}
		
		Besides, for any full-rank matrix $E$, it follows from \eqref{eq:noise} that 
		\begin{equation}\label{eq:EW2}
			\left[\begin{matrix}{(EW_i)}^\top \\ E^\top\end{matrix}\right]^\top \left[\begin{matrix}Q_{d}& S_{d} \\ *& R_{d}\end{matrix}\right]\left[\begin{matrix}{(EW_i)}^\top \\ E^\top\end{matrix} \right] \succeq 0.
		\end{equation}
		
		By substituting \eqref{eq:EW1} into \eqref{eq:EW2}, one has 
		\begin{equation}
			\left[\begin{matrix}{[\Delta_{i+}-(A\Delta_i+BU_i)]}^\top \\ E^\top\end{matrix}\right]^\top \left[\begin{matrix}Q_{d}& S_{d} \\ *& R_{d}\end{matrix}\right]\left[\begin{matrix}\cdot\end{matrix} \right]\succeq 0
		\end{equation}
		which implies
		\begin{equation*}
			\Bigg[\left[\begin{matrix}[A\;B]\left[\begin{matrix}-\Delta_i\\-U_i\end{matrix}\right] &\mathbf{0}\end{matrix}\right]+\left[\begin{matrix}\Delta_{i+}&E\end{matrix}\right]\Bigg] \left[\begin{matrix}Q_{d}& S_{d} \\ *& R_{d}\end{matrix}\right]\left[\begin{matrix}\cdot\end{matrix} \right]^\top  \succeq 0.
		\end{equation*}
		Then, we re-express the above inequality as 
		\begin{equation}
			\left[\begin{matrix}\left[\begin{matrix}A&B&I\end{matrix}\right]\left[\begin{matrix}\left[\begin{matrix}-\Delta_i\\-U_i\end{matrix}\right] &\mathbf{0}\\\Delta_{i+}&E\end{matrix}\right]\end{matrix}\right] \left[\begin{matrix}Q_{d}& S_{d} \\ *& R_{d}\end{matrix}\right]\left[\begin{matrix}\cdot\end{matrix} \right]^\top  \succeq 0,
		\end{equation}
		which is equivalent to \eqref{eq:rep},
		completing the proof.
	\end{proof}

It is worth remarking that the state-input trajectories $\{x_i(T)\}_{T=0}^{\rho}$, $\{u_i(T)\}_{T=0}^{\rho-1}$ can be collected  by stimulating the open-loop system \eqref{mas_noise} with designed control inputs offline.
On the other hand, our control objective, i.e., leader-following consensus, is achieved in closed-loop, an online process that is separated from and does not interfere with the offline, open-loop experiment.

\begin{figure}[t]
	\centering
	\includegraphics[scale=0.4]{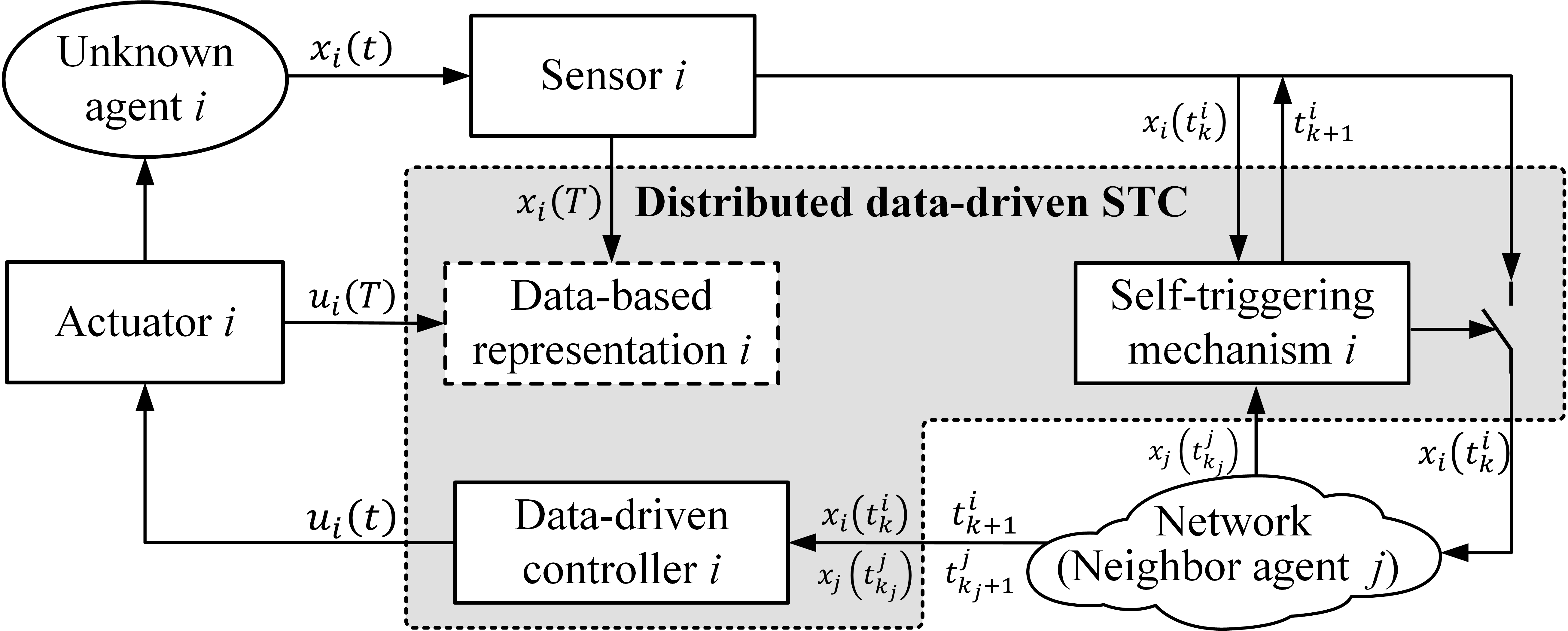}
	\caption{{Distributed data-driven STC for leader-following MAS consensus.}}
	\label{flow}
\end{figure}

\section{Distributed Data-driven Self-triggered Control}
\label{section3}
This section develops a distributed data-driven STC strategy to achieve leader-following consensus  of the unknown MAS in \eqref{mas}. See {Fig.~\ref{flow}} for an illustration of the considered data-driven consensus control architecture.


{To that end, we begin by revisiting the solutions from model-based design to provide a new data-driven STM and stabilization criterion, which arises from matrix manipulations of a model-based preliminary result.}
The states of the closed-loop system \eqref{delta} at triggering instants are governed by the following switched system
\begin{equation}
\label{lift}
	\delta_i(t_k^i+s_k^i)={A}_{\rm tr}^{s_k^i}\delta_i(t_k^i)+\underline{B}_{\rm tr}^{s_k^i}\underline{K}^{s_k^i}z_i(t_k^i)
\end{equation}
where $\underline{K}^{s_k^i}:=\big[ \underbrace{K^\top ~ K^\top ~ \cdots ~ K^\top}_{s_k^i~\text{times}} \big]^\top$ is the lifted controller gain, and ${A}_{\rm tr}^{s_k^i}$ and $\underline{B}_{\rm tr}^{s_k^i}:= [ A_{\rm tr}^{s_k^i-1}B_{\rm tr} ~ A_{\rm tr}^{s_k^i-2}B_{\rm tr} ~ \cdots ~ B_{\rm tr} ]$ are the lifted system matrices with $s_k^i\in \mathbb{N}_{[1,\,\bar{s}]}$ and $1<\bar{s}\in \mathbb{N}$.

Our idea of designing the distributed data-driven STM is unraveled next. 
First, leveraging the switched system \eqref{lift}, the model-based STM \eqref{trigger} can be reformulated using a QMI, which only contains already transmitted states. 
Then, a data-based QMI can be obtained based on a parametrization of the MAS using the S-procedure \cite{Waarde2022}.
However, we acknowledge that translating the model-based STM into the data-driven STM is not straightforward.
The reason is that the unknown lifted matrices $A_{\rm tr}^{s_k^i}$, $\underline{B}_{\rm tr}^{s_k^i}$ in \eqref{lift}, corresponding to different inter-event triggering intervals, prevent us from employing the general data-driven representation in Lem.~\ref{generallem}.
This motivates us to develop a distributed data-driven approach that is applicable to the self-triggered MAS.



Toward this end, we begin by constructing a distributed data-driven STM based on a data-driven representation of lifted MASs in Sec.~\ref{section3b}.
Subsequently, a distributed data-driven STC algorithm is proposed and corroborated by rigorous stability analysis in Sec.~\ref{section3c}. 
Finally, to compare the robustness of data-driven and model-based approaches against noise, we establish a model-based stability condition for guaranteed $\mathcal{H}_\infty$ performance in Sec.~\ref{section3d}.


\subsection{Data-driven STM design}
\label{section3b}

Before formally introducing our data-driven STM, we first propose a data-driven representation for the lifted leader-following MAS, which relies only on pre-collected data and a slightly altered assumption on the noise.

Given data $\{\delta_i(T)\}_{T=0}^{\rho+s-1}$, $\{u_i(T)\}_{T=0}^{\rho+s-2}$ per agent, we can construct the following matrices for all $s\in \mathbb{N}_{[1,\bar{s}]}$.
\begin{align*}
\Delta_{i+}^s&:=\left[\begin{matrix}\delta_i(s) \quad \delta_i(s+1) \quad \dots \quad \delta_i(\rho+s-1)\end{matrix}\right],\\
U_i^s&:=\left[\begin{matrix}
	u_i(0) & u_i(1) & \cdots &u_i(\rho-1) \\ \vdots & \vdots &\ddots &\vdots \\  u_i(s-1) & u_i(s) & \cdots &u_i(\rho+s-2)
	\end{matrix}\right].
\end{align*}

Moreover, the lifted noise $W_i^1$ for $s=1$ has been defined in Sec.\ref{section2c}, and that for $s>1$ is similarly given by
\begin{align*}
W^s_i :=\left[ A_{\rm tr}^{s-1}E ~ A_{\rm tr}^{s-2}E ~ \cdots ~ E \right]\underline{W}_i^s,\quad s\in \mathbb{N}_{[2,\bar{s}]}
\end{align*}
where
$$\underline{W}_i^s:=\left[\begin{matrix}
	w_i(0) & w_i(1) & \cdots &w_i(\rho-1) \\ \vdots & \vdots &\ddots &\vdots \\  w_i(s-1) & w_i(s) & \cdots &w_i(\rho+s-2)
	\end{matrix}\right].$$


Inspired by \cite{Wildhagen2022}, the following assumption on the noise $W^s_i$ of follower $i$ is proposed in terms of a QMI.
\begin{assumption}[Lifted noise bound]
\label{ass_lift_noise}
	The noise sequence $\{w_i(T)\}_{T=0}^{\rho+s-2}$ of follower $i$ stored in  $W_i^s$ satisfies the bound
\begin{equation}
\label{lift_noise}
\mathcal{W}_i^s=\Big\{W_i^s\in \mathbb{R}^{n_w^s\times\rho} \Big| \left[\begin{matrix} {W_i^s}^\top \\ I\end{matrix}\right]^\top\left[\begin{matrix} Q_{d}^s& S_{d}^s \\ *& R_{d}^s\end{matrix}\right]\left[\begin{matrix} {W_i^s}^\top \\ I\end{matrix}\right] \succeq 0 \Big\}
\end{equation}
with prescribed matrices $S_{d}^s\in \mathbb{R}^{\rho\times n_w^s}$, $R_{d}^s=R_{d}^{s\top}\in \mathbb{R}^{n_w^s\times n_w^s}$, $Q_{d}^s\in \mathbb{R}^{\rho\times n_w^s}\prec 0$, where $n_w^1=n_w$ and $n_w^s=n$ for $s\in \mathbb{N}_{[2,\bar{s}]}$.
\end{assumption}


Then, let us define the set of all lifted matrices $A^s$, $\underline{B}^s$ that explains the data $(\Delta_{i+}^s, \Delta_i, U_i^s, W_i^s)$ of agent $i$ as
\begin{equation}
\label{perturb}
	\Sigma_i^s:=\left\{[A^s\; \underline{B}^s] \mid \Delta_{i+}^s=A^s \Delta_i+\underline{B}^s U_i^s+E^s W_i^s\right\}
\end{equation}
where $E^1:=E$ for $s=1$ and $E^s:=I$ for $s\in \mathbb{N}_{[2,\bar{s}]}$.

{Before proceeding, a prerequisite assumption is posed. 

\begin{assumption}[Requirement of data]
	\label{ass_Theta}
	The matrix 
	\begin{equation}\label{eq:rank}
		\Theta_{i}^s:=\left[\begin{array}{cc}-\Delta_i & \mathbf{0} \\ -U_i^s & \mathbf{0} \\ \hline \Delta_{i+}^s & E^s\end{array}\right]\left[\begin{array}{cc}Q_{d}^s & S_{d}^s \\ * & R_{d}^s\end{array}\right]\left[\begin{matrix} \cdot\end{matrix} \right]^\top 
	\end{equation} has full column rank.
\end{assumption}}

Assumption~\ref{ass_Theta} is standard for stability analysis of linear systems. It has been shown in \cite{Berberich2021ap} that when the pre-collected data are sufficiently rich and $E$ is invertible, Assumption~\ref{ass_Theta} is satisfied.
{Based on the assumption above, we generalize the data-driven parameterization of single systems provided in \cite{Wildhagen2022} and advocate the following data-driven representation of lifted MASs.}

{\begin{lemma}[Data-driven representation of lifted MASs]
\label{representation}
Suppose Assumptions \ref{ass_lift_noise} and \ref{ass_Theta} hold.
The set $\bar{\Sigma}_i^s$ of agent $i$ can be expressed by the QMI 
\begin{equation}
	\label{eq:data1}
	\begin{split}
		\bar{\Sigma}_i^s:=&\bigg\{[A^s\; \underline{B}^s] \in\,   \mathbb{R}^{n \times(n+sp)} \Big| \\ &\quad \left[\begin{matrix}{[A^s\; \underline{B}^s] } \\ I\\I\end{matrix}\right]^\top  \left[\begin{matrix}\hat{\Theta}_i^s & \mathbf{0}\\ \mathbf{0} & M\end{matrix}\right]\left[\begin{matrix}{[A^s\; \underline{B}^s]} \\ I\\I\end{matrix}\right]\succeq 0\bigg\}
	\end{split}
\end{equation}
with $M\succ0$, $	\hat{\Theta}_i^s\!:=\!\left[\begin{matrix}-\hat{R}_{d}^s & \hat{S}_{d}^{s\top} \\ * & -\hat{Q}_{d}^s\end{matrix}\right] $, and	$\left[\begin{matrix}\hat{Q}_{d}^s & \hat{S}_{d}^{s\top} \\ * & \hat{R}_{d}^s\end{matrix}\right]:=\left[\begin{matrix}Q_{d}^s & S_{d}^s \\ * & R_{d}^s\end{matrix}\right]^{-1}$.
\end{lemma}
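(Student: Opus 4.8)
The plan is to mirror the proof of Lemma~\ref{generallem} but account for the transformed (dualized) form of the QMI and for the fact that $E^s$ may not be invertible when $s\ge 2$. First I would start from the definition of $\Sigma_i^s$ in \eqref{perturb}, namely that $[A^s\;\underline{B}^s]$ explains the data iff $E^sW_i^s=\Delta_{i+}^s-A^s\Delta_i-\underline{B}^sU_i^s$, and observe that this identity together with the lifted noise bound \eqref{lift_noise} constrains the admissible matrices. The subtlety is that in Lemma~\ref{generallem} the noise bound was first pre-multiplied by $E$ (using that $E$ is full column rank) to get \eqref{eq:EW2}; here, for $s\ge 2$ we have $E^s=I$, so \eqref{lift_noise} already is the bound on $E^sW_i^s$, while for $s=1$ the argument of Lemma~\ref{generallem} applies verbatim. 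So in both cases one obtains that $[A^s\;\underline{B}^s]\in\Sigma_i^s$ satisfies
\begin{equation*}
\left[\begin{matrix}{[A^s\;\underline{B}^s]^\top}\\ I\end{matrix}\right]^\top \Theta_i^s \left[\begin{matrix}{[A^s\;\underline{B}^s]^\top}\\ I\end{matrix}\right]\succeq 0
\end{equation*}
with $\Theta_i^s$ as in \eqref{eq:rank}, exactly by the chain of rewritings in the proof of Lemma~\ref{generallem}.

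Next I would dualize this QMI. The key algebraic fact is a standard matrix version of the dualization/Finsler-type lemma (see \cite{Waarde2022}): given $\Theta_i^s$ partitioned so that its lower-right block $R_d^s$-induced block is positive semidefinite and the upper-left block is negative definite (which follows from $Q_d^s\prec0$ and Assumption~\ref{ass_Theta} guaranteeing the relevant full-rank/inertia conditions), the QMI $\begin{bmatrix}X\\ I\end{bmatrix}^\top\Theta_i^s\begin{bmatrix}X\\ I\end{bmatrix}\succeq0$ is equivalent to $\begin{bmatrix}I\\ -X^\top\end{bmatrix}^\top (\Theta_i^s)^{-1}\begin{bmatrix}I\\ -X^\top\end{bmatrix}\preceq0$, and then computing $(\Theta_i^s)^{-1}$ in block form using the inverse $\begin{bmatrix}\hat Q_d^s&\hat S_d^{s\top}\\ *&\hat R_d^s\end{bmatrix}=\begin{bmatrix}Q_d^s&S_d^s\\ *&R_d^s\end{bmatrix}^{-1}$ yields precisely the data matrices in $\hat\Theta_i^s$. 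I would carry out this block computation: $\Theta_i^s = N_i^s \begin{bmatrix}Q_d^s&S_d^s\\ *&R_d^s\end{bmatrix} N_i^{s\top}$ with $N_i^s=\big[\begin{smallmatrix}-\Delta_i&\mathbf0\\ -U_i^s&\mathbf0\\ \Delta_{i+}^s&E^s\end{smallmatrix}\big]$, and invoking Assumption~\ref{ass_Theta} (full column rank of $\Theta_i^s$, equivalently of $N_i^s$ composed with the nonsingular middle matrix) to legitimately apply the dualization, after which the $\hat\Theta_i^s$ block appears. The extra $I$-row and the block-diagonal $M\succ0$ term are a slack/relaxation: since $M\succ0$, the term $\begin{bmatrix}I\end{bmatrix}^\top M\begin{bmatrix}I\end{bmatrix}=M\succ0$ only enlarges the set, so the QMI \eqref{eq:data1} is implied by the dualized constraint on $[A^s\;\underline B^s]$; I would note that this $M$ is deliberately introduced to match the structure needed later for applying an S-procedure/Petersen-type bound, and does not affect containment $\Sigma_i^s\subseteq\bar\Sigma_i^s$.

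Finally I would tie the two directions together: the lemma as stated only asserts that $\bar\Sigma_i^s$ \emph{can express} the data-consistent set, i.e. every element of $\Sigma_i^s$ lies in $\bar\Sigma_i^s$ (so that any controller/triggering matrix robust against all of $\bar\Sigma_i^s$ is robust against the true lifted system), which is the direction needed in Sec.~\ref{section3c}; I would state this containment explicitly and remark (as in \cite{Wildhagen2022}) that under Assumption~\ref{ass_Theta} the reverse inclusion modulo the $M$-slack also holds, so the parametrization is tight. I expect the main obstacle to be the careful handling of the dualization step when $E^s$ is not square and not full row rank for $s\ge2$ — one must check that the inertia/rank hypotheses of the dualization lemma are met, which is exactly what Assumption~\ref{ass_Theta} (full column rank of $\Theta_i^s$) and $Q_d^s\prec0$ are there to supply; a secondary bookkeeping difficulty is keeping the block partitions of the $(n+sp+n)\times(\cdot)$ matrices consistent across the lifting index $s$.
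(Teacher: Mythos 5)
Your proposal is correct and follows essentially the same route as the paper: derive the primal QMI form of $\Sigma_i^s$ by mirroring Lemma~\ref{generallem}, apply the dualization lemma (using $Q_d^s\prec 0$ and the nonsingularity of $\Theta_i^s$ from Assumption~\ref{ass_Theta}) to obtain the $\hat\Theta_i^s$-form, and observe that appending the block $M\succ 0$ only enlarges the feasible set, yielding $\Sigma_i^s\subseteq\bar\Sigma_i^s$. Your closing remark that the parametrization is ``tight'' modulo the $M$-slack is not established by the paper (which proves only the containment) and is not needed, but it does not affect the validity of the argument.
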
}

\begin{proof}
	Similarly to the proof of Lem. \ref{generallem}, the set $\Sigma_i^s$ in \eqref{perturb} can be transformed into a quadratic constraint as below
	\begin{equation}
		\label{Theta}
		\begin{split}
			\Sigma_i^s:=&\bigg\{[A^s\; \underline{B}^s] \in\,   \mathbb{R}^{n \times(n+sp)} \Big| \\ &\quad \left[\begin{matrix}{[A^s\; \underline{B}^s]^\top } \\ I\end{matrix}\right]^\top  \Theta_i^s \left[\begin{matrix}{[A^s\; \underline{B}^s]^\top } \\ I\end{matrix}\right] \succeq 0\bigg\},
		\end{split}
	\end{equation}
	where $\Theta_i^s$ has been defined in \eqref{eq:rank}.
	
	In addition, it follows from \eqref{Theta} and the dualization lemma \cite{Scherer2000} that the following inequality holds for all $[A^s\; \underline{B}^s]\in {\Sigma}_i^s$
	\begin{equation}\label{eq:data0}
		\left[\begin{matrix}{[A^s\; \underline{B}^s] } \\ I\end{matrix}\right]^\top  \hat{\Theta}_i^s \left[\begin{matrix}{[A^s\; \underline{B}^s]} \\ I\end{matrix}\right] \succeq 0.
	\end{equation}

For any given matrix $M\succ 0$, a modified data-driven representation of lifted MASs can be expressed by the set $\bar{\Sigma}_i^s$ defined in \eqref{eq:data1}.
Due to the introduction of $M$, any $[A^s\; \underline{B}^s]$ satisfying \eqref{eq:data0} also satisfies \eqref{eq:data1}, which implies ${\Sigma}_i^s\subseteq \bar{\Sigma}_i^s$. 
Hence, the set $\bar{\Sigma}_i^s$ contains all possible  matrices $[A^s\; \underline{B}^s]$ compatible with the measured data and the lifted noise.
\end{proof}


\begin{figure*}[t]
	\begin{equation}
		\label{long}
		\tag{FQ}
		\begin{split}
			\mathcal{F}_i(\delta_i(t_k^i),\, z_i(t_k^i))&:=\left[\begin{matrix}
				I & \mathbf{0} & \mathbf{0} \\ \mathbf{0} & \delta_i(t_k^i) &\mathbf{0} \\   \mathbf{0} & \mathbf{0} & z_i(t_k^i)
			\end{matrix}\right]^\top
			{\left[\begin{matrix}
					-\Phi & \Phi & \mathbf{0} \\ \Phi & -\Phi &\mathbf{0} \\   \mathbf{0} & \mathbf{0} & \sigma\Phi
				\end{matrix}\right]}\left[\begin{matrix}\cdot\end{matrix}\right],\\
			\mathcal{Q}_i^s(\delta_i(t_k^i),\,z_i(t_k^i))&:=\left[\begin{matrix}
				I & \mathbf{0} & \mathbf{0} &\mathbf{0}&\mathbf{0}\\ \mathbf{0} & \delta_i^\top(t_k^i) &z_i^\top(t_k^i) {(\underline{K}^s)^\top} &\mathbf{0}&\mathbf{0}  \\  \mathbf{0}& \mathbf{0} & \mathbf{0} & \delta_i^\top(t_k^i) & z_i^\top(t_k^i) {(\underline{K}^s)^\top}	\end{matrix}\right]\! \left[\begin{matrix}\hat{\Theta}_i^s & \mathbf{0}\\ \mathbf{0} & M\end{matrix}\right]\!
			\left[\begin{matrix} \cdot \end{matrix}  \right]^\top.
		\end{split}
	\end{equation}
	\rule[-10pt]{17.9cm}{0.08em}
\end{figure*}

\begin{remark}\label{rmk:representation}
	[\emph{Discussion on the lifted data-driven representation}] 
It is worth emphasizing that we transform the original set ${\Sigma}_i^s$ of allowable system matrices into $\bar{\Sigma}_i^s$. The rationale behind this stems from the fact that, unlike single systems, an additional term $z_i(t)$ involved in the STC protocol \eqref{controller} and \eqref{trigger}, which accounts for local interactions between agents, is necessary for achieving consensus of MASs. 
In line with this, we introduce the matrix $M$ to construct a agent-wise data-driven representation of the lifted MAS in Lem. \ref{representation}, which provides an avenue to develop a distributed STM employing data rather than explicit models. 
However, it is not straightforward how one can apply Lem. \ref{representation} in the data-driven STM design. 
This is mainly due to the difficulty of obtaining a strict characterization as in \eqref{lift_noise} for $s\in \mathbb{N}_{[2,\bar{s}]}$, since the unknown lifted sequence $A^s$ is coupled with the lifted noise $\mathcal{W}_i^s$.
Thus, it is necessary to estimate an upper bound on the set $\mathcal{W}_i^s$ and compute $\Theta_i^s$ for all $s\in \mathbb{N}_{[1,\bar{s}]}$.
We recall from \cite{Wildhagen2022} that an overapproximation of $\mathcal{W}_i^s$ can be derived by estimating the maximum singular value of $A^s$ using only measured data in advance.
Likewise, we call for \cite[Alg. 1]{Wildhagen2022} to iteratively compute the lifted noise bound and lifted system representations from $s=1$ to $\bar{s}$ for each follower in the network.
\end{remark}


The next lemma provides an equivalent model-based triggering condition for all possible $[A^s\; \underline{B}^s]\in \bar{\Sigma}_i^s~(s\in\mathbb{N}_{[1,\bar{s}]})$.

\begin{lemma}[QMI-form model-based triggering condition]
	\label{lem_tri} 
	 Suppose Assumptions \ref{ass_lift_noise} and \ref{ass_Theta} hold.
	The model-based STM \eqref{trigger} is satisfied if the following condition holds for all $[A^s\; \underline{B}^s]\in \bar{\Sigma}_i^s~(s\in\mathbb{N}_{[1,\bar{s}]})$
	\begin{equation}
		\label{trigger_AB}
		\left[\begin{matrix}{A^{s}\delta_i(t_k^i)+\underline{B}^{s}\underline{K}^{s}z_i(t_k^i)}\\ \delta_i(t_k^i) \\z_i(t_k^i)\end{matrix}\right]^\top \!
		\left[\begin{matrix}
			-\Phi & \Phi & \mathbf{0} \\ \Phi & -\Phi &\mathbf{0} \\   \mathbf{0} & \mathbf{0} & \sigma\Phi
		\end{matrix}\right]\!
		\left[\begin{matrix}\cdot\end{matrix}\right]\succeq 0.
	\end{equation}
\end{lemma}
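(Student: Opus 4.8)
The plan is to pass from the robust quadratic inequality \eqref{trigger_AB}, which is imposed over the entire data‑consistent set $\bar{\Sigma}_i^s$, to the single scalar inequality that is precisely the non‑triggering condition of the model‑based rule \eqref{trigger} for the true plant. First I would invoke Lem.~\ref{representation}: under Assumptions~\ref{ass_lift_noise} and \ref{ass_Theta}, the true lifted pair $[{A}_{\rm tr}^{s}\;\underline{B}_{\rm tr}^{s}]$ reproduces the pre‑collected data of agent $i$ with a noise realization admissible under the lifted bound \eqref{lift_noise} (which, as recalled in Rmk.~\ref{rmk:representation}, is a valid overapproximation of the actual lifted noise across all horizons), so $[{A}_{\rm tr}^{s}\;\underline{B}_{\rm tr}^{s}]\in{\Sigma}_i^s\subseteq\bar{\Sigma}_i^s$ for every $s\in\mathbb{N}_{[1,\bar{s}]}$. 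Hence, if \eqref{trigger_AB} holds for all $[A^s\;\underline{B}^s]\in\bar{\Sigma}_i^s$, it holds in particular at $[A^s\;\underline{B}^s]=[{A}_{\rm tr}^{s}\;\underline{B}_{\rm tr}^{s}]$.

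Next I would substitute this choice and simplify. By the lifted closed‑loop identity \eqref{lift}, with $s=s_k^i$ the top block of the vector in \eqref{trigger_AB} equals ${A}_{\rm tr}^{s}\delta_i(t_k^i)+\underline{B}_{\rm tr}^{s}\underline{K}^{s}z_i(t_k^i)=\delta_i(t_k^i+s)$. Completing the square in the leading $2\times2$ block of the weight matrix of \eqref{trigger_AB} (a factor of the form $-\Phi$ acting on the difference of its two arguments) collapses the quadratic form to $-(\delta_i(t_k^i+s)-\delta_i(t_k^i))^{\top}\Phi(\delta_i(t_k^i+s)-\delta_i(t_k^i))+\sigma\,z_i^{\top}(t_k^i)\Phi z_i(t_k^i)\ge 0$. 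Writing the transmission error in the tracking‑error coordinates, which is licit because the leader runs no triggering policy and its state is available (the standing assumption), we have $e_i(s_k^i)=\delta_i(t_k^i+s_k^i)-\delta_i(t_k^i)$, whence \eqref{trigger_AB} becomes exactly $e_i^{\top}(s_k^i)\Phi e_i(s_k^i)-\sigma z_i^{\top}(t_k^i)\Phi z_i(t_k^i)\le 0$, i.e., $f_i(x_i(t_k^i),z_i(t_k^i),s_k^i)\le 0$. Therefore the triggering condition $f_i\ge 0$ in \eqref{trigger} is not activated at $s_k^i$, so the model‑based STM \eqref{trigger} is respected, which is the claim.

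I do not expect a deep obstacle: once Lem.~\ref{representation} is available, this lemma is a congruence/square‑completion argument. The two points needing care are (i) certifying that the enlarged set $\bar{\Sigma}_i^s$, obtained via the dualization lemma and the auxiliary $M\succ0$, still contains the ground‑truth lifted dynamics for \emph{every} horizon $s\le\bar{s}$, which relies on the lifted‑noise overapproximation of Rmk.~\ref{rmk:representation} being valid uniformly in $s$; and (ii) the coordinate bookkeeping identifying ${A}_{\rm tr}^{s}\delta_i(t_k^i)+\underline{B}_{\rm tr}^{s}\underline{K}^{s}z_i(t_k^i)$ with $\delta_i(t_k^i+s)$ via \eqref{lift} and, through the leader assumption, $e_i(s_k^i)$ with $\delta_i(t_k^i+s_k^i)-\delta_i(t_k^i)$. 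A minor edge case worth a remark is the boundary $f_i=0$, where \eqref{trigger} triggers by definition: since \eqref{trigger_AB} delivers $f_i\le 0$ rather than $f_i<0$, one either adopts a strict triggering inequality in \eqref{trigger} or notes that this event is non‑generic. Finally, it is worth emphasizing that the data‑driven test obtained this way is at least as conservative as the model‑based one, so the inter‑event intervals it certifies never exceed those of \eqref{trigger} — precisely the safety property that makes the subsequent data‑driven STM design sound.
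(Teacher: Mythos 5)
Your proposal is correct and follows essentially the same route as the paper: both arguments rest on the containment $[A_{\rm tr}^{s}\;\underline{B}_{\rm tr}^{s}]\in\Sigma_i^s\subseteq\bar{\Sigma}_i^s$, the lifted identity \eqref{lift}, and the observation that, with $e_i(s_k^i)=\delta_i(t_k^i+s_k^i)-\delta_i(t_k^i)$ (leader not triggered), the quadratic form in \eqref{trigger_AB} is exactly $-f_i\geq 0$, i.e., the non-triggering condition of \eqref{trigger}. Your extra remarks on the boundary case $f_i=0$ and on the uniform-in-$s$ validity of the lifted-noise overapproximation are points the paper glosses over, but they do not change the argument.
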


\begin{proof}
	The model-based STM \eqref{trigger} can be equivalently
	translated into the following version
	\begin{equation}
		\label{trigger_dual}
		t_{k+1}^i=t_k^i+\max_{s_k^i\in \mathbb{N}}\left\{s_k^i\geq 1\big|f_i^{'}\left(x_i(t_k^i),\,z_i(t_k^i),\, s_k^i\right)\geq0 \right\}
	\end{equation}
where $f_i^{'}\left(x_i(t_k^i),\,z_i(t_k^i),\, s_k^i\right)=-f_i\left(x_i(t_k^i),\,z_i(t_k^i),\, s_k^i\right)$.

Since $\delta_i(t_k^i)=x_i(t_k^i)-x_0(t)$, one has that $e_{i}(s_k^i)=\delta_i(t_k^i+s_k^i)-\delta_i(t_k^i)$. 
By means of \eqref{lift}, the triggering condition $f_i^{'}\left(x_i(t_k^i),\,z_i(t_k^i),\,s_k^i\right)\geq0$ is rewritten as the following QMI
\begin{equation}
	\label{self2}
	\left[\begin{matrix}{A_{\rm tr}^{s_k^i}\delta_i(t_k^i)+\underline{B}_{\rm tr}^{s_k^i}\underline{K}^{s_k^i}z_i(t_k^i)}\\ \delta_i(t_k^i) \\z_i(t_k^i)\end{matrix}\right]^\top  
	\left[\begin{matrix}
		-\Phi & \Phi & \mathbf{0} \\ \Phi & -\Phi &\mathbf{0} \\   \mathbf{0} & \mathbf{0} & \sigma\Phi
	\end{matrix}\right]
	\left[\begin{matrix}\cdot\end{matrix}\right]\succeq 0.
\end{equation}

Furthermore, since the true system matrices $[A_{\rm tr}^{s_k^i}\; \underline{B}_{\rm tr}^{s_k^i}]\in\Sigma_i^s\subseteq \bar{\Sigma}_i^s$, the condition \eqref{self2} holds for $[A_{\rm tr}^{s_k^i}\; \underline{B}_{\rm tr}^{s_k^i}]$ if it holds true for all $[A^s\; \underline{B}^s]\in \bar{\Sigma}_i^s~(s\in\mathbb{N}_{[1,\bar{s}]})$. 
That is, the model-based condition \eqref{self2} is guaranteed by the condition \eqref{trigger_AB} for all $[A^s\; \underline{B}^s]\in \bar{\Sigma}_i^s$, which completes the proof.
\end{proof}

Having obtained the data-driven representation in Lem.~\ref{representation} and the model-based triggering condition for all $[A^s\; \underline{B}^s]\in \bar{\Sigma}_i^s$ in Lem.~\ref{lem_tri}, a data-driven STM will be built in the subsequent.

\begin{thm}[Data-driven self-triggering condition]
\label{thm1}
	 Consider the MAS \eqref{mas} and the feedback controller  \eqref{controller} under graph $\bar{\mathcal{G}}$. 
	 Suppose Assumptions \ref{ass_lift_noise}-\ref{ass_Theta} hold. For a given scalar $\sigma>0$, a controller gain $K$, a triggering matrix $\Phi\succ 0$, and the latest transmitted state $x_i(t_k^i)$ of agent $i$, if and only if there exists a scalar $\alpha>0$ such that the following LMI holds for $i=1,2,\ldots,N$ and some $s\in\mathbb{N}_{[1,\bar{s}]}$
\begin{align}
\label{pf1}
	\mathcal{F}_i(\delta_i(t_k^i),\,z_i(t_k^i))-\alpha \mathcal{Q}_i^s(\delta_i(t_k^i),\,z_i(t_k^i))\succeq0,
\end{align}
 the triggering condition \eqref{trigger_AB} is satisfied for any {$[A^s\; \underline{B}^s]\in \bar{\Sigma}_i^s$}, where matrices $\mathcal{F}_i(\delta_i(t_k^i),\,z_i(t_k^i))$ and $\mathcal{Q}_i^s(\delta_i(t_k^i),\,z_i(t_k^i))$ are defined in \eqref{long}.
\end{thm}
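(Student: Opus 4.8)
The plan is to recognize \eqref{pf1} as exactly the S-procedure/lossless-S-lemma certificate for the implication ``\,$[A^s\;\underline{B}^s]\in\bar\Sigma_i^s \Rightarrow$ \eqref{trigger_AB} holds\,''. First I would rewrite both quadratic forms in the same ``lifted'' variable. Introduce the decision variable $X:=[A^s\;\underline{B}^s]^\top\in\mathbb{R}^{(n+sp)\times n}$ and the fixed data vector built from $\delta_i(t_k^i)$ and $z_i(t_k^i)$. A direct computation shows that, plugging the closed-loop map $\delta_i(t_k^i+s)=A^s\delta_i(t_k^i)+\underline B^s\underline K^s z_i(t_k^i)$ into the triggering QMI \eqref{trigger_AB}, the left-hand side of \eqref{trigger_AB} equals $\mathcal{F}_i(\delta_i(t_k^i),z_i(t_k^i))$ evaluated as a quadratic form in the stacked vector $\operatorname{col}(1,\,X\,\delta_i(t_k^i)$-type blocks$)$ — this is precisely what the first matrix in \eqref{long} encodes, with the three block rows corresponding to the ``constant,'' ``$\delta_i(t_k^i)$,'' and ``$z_i(t_k^i)$'' channels. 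Similarly, by Lemma \ref{representation}, membership $[A^s\;\underline B^s]\in\bar\Sigma_i^s$ is the QMI \eqref{eq:data1}; multiplying \eqref{eq:data1} on both sides by the same $\operatorname{diag}(I,\delta_i^\top(t_k^i),\dots)$-type selector that appears in the definition of $\mathcal{Q}_i^s$ yields exactly $\mathcal{Q}_i^s(\delta_i(t_k^i),z_i(t_k^i))\succeq 0$ as the induced constraint on the \emph{same} stacked vector. So after this bookkeeping both statements live in one common quadratic-form space: \eqref{trigger_AB}-for-all-$[A^s\;\underline B^s]\in\bar\Sigma_i^s$ is equivalent to ``$\mathcal{F}_i\succeq 0$ on the cone $\{v:\mathcal{Q}_i^s(v)\ge 0\}$''.

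Next I would invoke the S-lemma. The ``if'' direction is the easy, purely algebraic half: if there is $\alpha>0$ with $\mathcal{F}_i-\alpha\mathcal{Q}_i^s\succeq 0$, then for every $v$ with $v^\top\mathcal{Q}_i^s v\ge 0$ one has $v^\top\mathcal{F}_i v\ge \alpha\,v^\top\mathcal{Q}_i^s v\ge 0$; specializing $v$ to the vector generated by the true lifted matrices $[A_{\rm tr}^{s}\;\underline B_{\rm tr}^{s}]\in\Sigma_i^s\subseteq\bar\Sigma_i^s$ (using Lemma \ref{lem_tri}) gives \eqref{self2}, hence \eqref{trigger_dual}, hence the model-based STM \eqref{trigger}. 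For the ``only if'' (necessity of such an $\alpha$) I would appeal to the \emph{strict} / matrix S-procedure, which is lossless here because the constraint QMI $\mathcal{Q}_i^s$ is \emph{strictly} feasible: Assumption \ref{ass_Theta} (full column rank of $\Theta_i^s$, equivalently $\hat\Theta_i^s$ nonsingular) together with $M\succ 0$ guarantees that there is an interior point — a strictly feasible $[A^s\;\underline B^s]$ making \eqref{eq:data1} hold with strict inequality, so the constraint set has nonempty relative interior — which is exactly the Slater-type hypothesis under which the S-procedure with a single quadratic constraint is lossless (Finsler/Yakubovich). I would cite the dualization/S-procedure lemma of \cite{Scherer2000,Waarde2022} rather than re-deriving it.

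The main obstacle I anticipate is the first, ``bookkeeping'' step: verifying that the particular selector matrices written in \eqref{long} — with the $(\underline K^s)^\top$ factors multiplying the $z_i(t_k^i)$ blocks and the specific placement of the three/five block columns — reproduce \emph{exactly} the composition ``substitute the lifted closed-loop dynamics into \eqref{trigger_AB}'' and ``restrict \eqref{eq:data1} to the realized data vector,'' with no dimension or transpose mismatch between the $n$-, $(n+sp)$-, and $\rho$-sized blocks. This is routine but error-prone; I would do it once, carefully, by expanding $\mathcal{F}_i$ and $\mathcal{Q}_i^s$ and matching them term-by-term against the two target quadratic forms, and then the S-lemma does the rest. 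A secondary point worth a sentence is noting that \eqref{pf1} need only hold ``for some $s\in\mathbb{N}_{[1,\bar s]}$'' because the triggering rule \eqref{trigger_dual} picks the \emph{largest} admissible inter-event interval, so feasibility of \eqref{pf1} at any such $s$ certifies that $s_k^i\ge s$ is admissible; the STM then uses the maximal feasible $s$.
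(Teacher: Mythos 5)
Your proposal is correct and follows essentially the same route as the paper: both recast the triggering condition \eqref{trigger_AB} and the set membership \eqref{eq:data1} as two QMIs in the common lifted variable $[A^s\;\underline{B}^s]$ evaluated against the transmitted states, and then invoke the full-block matrix S-procedure of \cite{Waarde2022} --- sufficiency by the elementary direction, necessity by losslessness. Your explicit verification of the Slater-type hypothesis via Assumption~\ref{ass_Theta} and $M\succ 0$ is in fact more detailed than the paper's proof, which only asserts that the working assumptions of the S-lemma are fulfilled.
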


\begin{proof}
First, to prove the ``only if" statement, suppose that the model-based triggering condition \eqref{trigger_AB} is satisfied for all $[A^s\; \underline{B}^s]\in\bar{\Sigma}_i^s$, which can be rewritten as
\begin{equation}
	\label{self3}
	\left[\begin{matrix} {A^{s}\delta_i(t_k^i)+\underline{B}^{s}\underline{K}^{s}z_i(t_k^i)}\\ I \\I\end{matrix}\right]^\top  
	\mathcal{F}(\delta_i(t_k^i),\, z_i(t_k^i))
	\left[\begin{matrix}\cdot\end{matrix}\right]\succeq 0.
\end{equation}

In addition, let us re-express \eqref{eq:data1}. For any $[A^s\; \underline{B}^s]\in\bar{\Sigma}_i^s$, one has
\begin{equation}
	\label{eq:data2}
	\left[\begin{matrix}A^s\delta_i(t_k^i)+\underline{B}^s\underline{K}^s z_i(t_k^i) \\ I\\I\end{matrix}\right]^\top \!\!\mathcal{Q}^s(\delta_i(t_k^i),\,z_i(t_k^i))
	\left[\begin{matrix}\cdot \end{matrix}\right] \succeq 0.
\end{equation}

Now, the working assumptions of \emph{S}-Lemma, as stated in \cite[Thm. 9]{Waarde2022}, are fulfilled.
By utilizing this full-block S-procedure, we conclude that if the QMI \eqref{self3} holds for any $[A^s\; \underline{B}^s]\in \bar{\Sigma}_i^s~(s\in\mathbb{N}_{[1,\bar{s}]})$, there exists a scalar $\alpha >0$ such that
	$\mathcal{F}_i(\delta_i(t_k^i),\,z_i(t_k^i))-\alpha \mathcal{Q}_i^s(\delta_i(t_k^i),\,z_i(t_k^i))\succeq0.$

Conversely, to prove the ``if" statement, suppose there exists a $\alpha>0$ satisfying \eqref{pf1} for $i=1,2,\ldots,N$. As such, \eqref{self3} and \eqref{eq:data2} hold for any $[A^s\; \underline{B}^s]\in \bar{\Sigma}_i^s~(s\in\mathbb{N}_{[1,\bar{s}]})$. We
conclude that the triggering condition \eqref{trigger_AB} is satisfied for any $[A^s\; \underline{B}^s]\in \bar{\Sigma}_i^s$, which accomplishes the proof.
\end{proof}

According to Thm.~\ref{thm1}, we are in a position to obtain the following data-driven STM reminiscent of \eqref{trigger}, yet eliminating the need for any prior model knowledge
\begin{equation}
\label{trigger_data}
	t_{k+1}^i=t_k^i+\max_{s_k^i\in \mathbb{N}} \left\{s_k^i\geq 1\big|\hat{f}_i\left(x_i(t_k^i),\,z_i(t_k^i),\, s_k^i\right)\succeq 0
 \right\}
\end{equation}
where the data-driven triggering function is given by
\begin{align}
	\hat{f}_i\left(x_i(t_k^i),\,z_i(t_k^i),\, s_k^i\right)&=\mathcal{F}_i(\delta_i(t_k^i),\,z_i(t_k^i))\nonumber \\ 
	&\quad -\alpha \mathcal{Q}_i^s(\delta_i(t_k^i),\,z_i(t_k^i)). \label{self_data}
\end{align}

The data-driven STM \eqref{trigger_data} makes it possible to predetermine the next triggering time $t_{k+1}^i$ at $t_k^i$ using the already transmitted information and pre-collected state-input data $\{x_i(T)\}_{T=0}^{\rho+s-1}$, $\{u_i(T)\}_{T=0}^{\rho+s-2}$.

\begin{remark}[\emph{Relationship between the data-driven STM and the	model-based STM}]
	 Thm. \ref{thm1} implies the data-driven condition \eqref{pf1} and the model-based condition \eqref{trigger_AB} are equivalent for the same set $\bar{\Sigma}_i^s$. 
	 Moreover, the model-based STM \eqref{trigger} only holds for the true system matrices $[A_{\rm tr}^{s_k^i}\; \underline{B}_{\rm tr}^{s_k^i}]$, while it can be inferred from Lem. \ref{lem_tri} and Thm. \ref{thm1} that the data-driven STM \eqref{trigger_data} applies to all $[A^s\; \underline{B}^s]\in \bar{\Sigma}_i^s$ at the price of introducing conservatism.
Consequently, the data-driven STM \eqref{trigger_data} does not result in a larger minimum inter-event triggering interval compared to the model-based one \eqref{trigger} without noise. 
\end{remark}

\begin{remark}[\emph{Novelty}]
The novelty of the distributed data-driven STM lies in three aspects.
First, the distributed setting for MASs is investigated in Thm.~\ref{thm1}, which introduces  local interaction between agents (c.f., $z_i(t)$), posing a challenge to the design of data-driven STM.
Thereby, a data-driven representation of lifted MASs is derived in Lem.~\ref{representation} by introducing a positive matrix $M$.
Using this representation and the S-procedure, we are finally able to translate the model-based condition \eqref{self3} into the data-driven one \eqref{pf1}. 
Second, the data-driven STM \eqref{trigger_data} removes the dependence on explicit system models, which can be more applicable to real-world applications. 
Third, the data-driven STM \eqref{trigger_data} avoids successive monitoring triggering conditions, and significantly
reduces inter-agent communication and energy consumption.
\end{remark}

\begin{remark}[\emph{Flexibility of the controller gain $K$}]
	The choice of controller gain $K$ in Thm. \ref{thm1} is unconstrained. This flexibility arises due to the current focus of Thm. \ref{thm1}, which primarily concentrates on devising a data-driven STM capable of directly predicting the subsequent triggering time from noisy data. The forthcoming subsection will delve into the examination of data-driven stability analysis and design.
\end{remark}

\subsection{Data-driven stability analysis and controller design}
\label{section3c}
Due to the lack of knowledge about the true system matrices $A_{\rm tr}$, $B_{\rm tr}$, it is necessary to ensure stability of the closed-loop system \eqref{delta} for all possible $[A\; B]\in \Sigma_i$.
Based on the data-driven representation of MASs in Lem.~\ref{generallem} and the controller \eqref{controller}, our data-driven self-triggered consensus algorithm for unknown MASs under the STM \eqref{trigger_data}, is presented in Alg.~\ref{Alg}, with stability guarantees provided below.

\begin{algorithm}[!htb]
	\caption{Distributed data-driven self-triggered consensus}
	\label{Alg}
	\begin{algorithmic}[1]
		\State \textbf{Input:} desired lifespan of MAS $T$; triggering number $k=1$; initial states of leader $x_0(0)\in \mathbb{R}^n$ and follower $x_i(0) \in \mathbb{R}^n$; current triggering time $t_k^i$; triggering parameter $\sigma>0$; matrices in the noise model $Q_d$, $S_d$, and $R_d$; {parameters $\sigma$ and $\epsilon$;} matrices $\Theta_i^s$ for all $s=1,2,\ldots,\bar{s}$, and state-input data $\{x_i(T)\}_{T=0}^{\rho}$, $\{u_i(T)\}_{T=0}^{\rho-1}$ for $\forall i=1,2,\ldots,N$.
		\State \textbf{Construct} data matrices $\Delta_{i+}$, $\Delta_i$, and $U_i$.
		\State \textbf{Build} $\Theta_i$ in \eqref{eq:rep} with given matrices $Q_d$, $S_d$, and $R_d$ satisfying Assumption \ref{noiseass}.
		\State \textbf{Search} for feasible matrices $\bar{\Phi}$, $G$, and $K_G$ by solving LMIs in \eqref{eqthm2} with properly chosen parameters $\sigma$ and $\epsilon$.
		\State \textbf{Design} the controller gain $K=K_GG^{-1}$ and the triggering matrix $\Phi=(G^{-1})^\top\bar{\Phi}G^{-1}$.
		\While {$t<T$}
		\For{$i=0,1,2,\ldots, N$}
		\If{$t=t_{k}^i$}
		\label{6}
		\State{Broadcast $x_i(t_k^i)$ to agent $j\in \mathcal{N}_i$;}
		\State {Compute $z_i(t_k^i)$ from \eqref{z} based on the updated $x_j(t_k^j)$, $j\in \mathcal{N}_i$;}
		\State{Update the control protocol \eqref{controller} and the dynamics \eqref{mas} of agent $i$.}
		\If  {$\hat{f}_i\left(x_i(t_k^i),\,z_i(t_k^i),\, s_k^i\right)\succeq 0$ is satisfied}
		\label{9}
		\State {Obtain $s_k^i$ and predict the next triggering time $t_{k+1}^i=t_k^i+s_k^i$;}
		\Else
		\State {Set $s_k^i=s_k^i+1$ and update $\hat{f}_i(t)$ from \eqref{self_data}.}
		\EndIf
		\EndIf
		\State \textbf{Set} $k=k+1$ and go back to \ref{6}.
		\EndFor
		\EndWhile
	\end{algorithmic}
\end{algorithm}

\begin{thm}[Data-driven consensus and design]
\label{thm2}
	Consider the MAS \eqref{mas} under graph $\bar{\mathcal{G}}$,  and let Assumptions~\ref{graph}-\ref{ass_Theta} hold. For given scalars $\sigma>0$ and $\epsilon$, the leader-following consensus is achieved asymptotically for any initial state under the feedback controller \eqref{controller} and the data-driven STM \eqref{trigger_data} for any $[A\; B]\in \Sigma_i$, if there exist some scalar $\beta>0$ and matrices $P\succ0$, ${\bar{\Phi}}\succ0$, $G$, $K_G$ such that the following LMIs are satisfied for $i=1,2,\ldots,N$
\begin{equation}
\label{eqthm2}
	\left[\begin{matrix} \mathbf{0}&\mathcal{T}\\ \ast &\Omega+{\Psi}\end{matrix}\right]+\beta (I_N\otimes {\widetilde{\Theta}_i})\prec 0 
\end{equation}
	where
\begin{align*}
	\mathcal{T}&:=[(I_N\otimes G L_1)^\top,\;({\mathcal{H}}\otimes K_G L_3 )^\top]^\top\\
	\mathcal{R}&:=(L_1+\epsilon L_2)^\top,\quad
	{\widetilde{\Theta}_i:=\left[\begin{matrix} I & \mathbf{0}\\\mathbf{0}& \mathcal{R}\end{matrix}\right]
	{\Theta}_i\left[\begin{matrix} I & \mathbf{0}\\\mathbf{0}& \mathcal{R}\end{matrix}\right]^\top}\\
	\Omega &:=L_2^\top(I_N\otimes P)L_2-L_1^\top(I_N\otimes P)L_1\\
	\Psi &:={\rm Sym}\{-(I_N\otimes \mathcal{R} GL_2)\}+\sigma(L_3^\top ({\mathcal{H}}\otimes {\bar{\Phi}})L_3)\\
	&~\quad-(L_3-L_1)^\top(I_N\otimes {\bar{\Phi}} )(L_3-L_1)\\
	L_\kappa&:=\big[\mathbf{0}_{n\times (\kappa-1)n},\, I_n,\, \mathbf{0}_{n\times (3-\kappa)n}\big],\;\kappa=1,\,2,\,3\\
	\end{align*}
Moreover, the controller gain is given by $K=K_{G}G^{-1}$ {and the triggering matrix is co-designed as $\Phi={(G^{-1})}^\top\bar{\Phi}G^{-1}$.}
\end{thm}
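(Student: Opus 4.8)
The plan is to recast leader--following consensus of \eqref{mas} as \emph{robust} asymptotic stability of the aggregated tracking-error system and to certify it by a Lyapunov argument that is made data-dependent through the matrix S-lemma. As already observed below \eqref{delta}, consensus of \eqref{mas} is equivalent to stability of the closed loop \eqref{delta}. I would first stack the followers into $\delta(t):=[\delta_1^\top(t)\ \cdots\ \delta_N^\top(t)]^\top$ and use \eqref{z} to express the one-step closed-loop relation in which each follower applies the held input $Kz_i(t_k^i)$; under Assumption~\ref{graph} the matrix $\mathcal H=\mathcal L+\mathcal P$ is symmetric positive definite, so the combined measurements satisfy $z=(\mathcal H\otimes I_n)\hat\delta$ at the broadcast instants, with $\hat\delta$ collecting the latest broadcast states. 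Introducing the triggering error between the latest broadcast state and the current state, the STM \eqref{trigger_data} --- designed (via Lem.~\ref{lem_tri} and Thm.~\ref{thm1}) so that the model-based triggering inequality $e_i^\top\Phi e_i\le\sigma z_i^\top(t_k^i)\Phi z_i(t_k^i)$ is maintained between consecutive broadcasts --- supplies a quadratic inequality tying the error, the state and the broadcast measurement together. In the stacked coordinates whose three $n$-blocks per agent are selected by $L_1,L_2,L_3$, the analysis then reduces to showing that a single quadratic form is negative subject to the one-step dynamics.

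I would then take $V(\delta)=\delta^\top(I_N\otimes P)\delta$ with $P\succ0$ and evaluate $V(\delta(t+1))-V(\delta(t))$ along \eqref{delta}. The consecutive-state $P$-difference gives precisely the $\Omega$ block, and adding a $\sigma$-weighted copy of the triggering inequality, together with the congruence $\Phi\mapsto\bar\Phi=G^\top\Phi G$, produces the $\Psi$ block --- the $\bar\Phi$-weighted error term $-(L_3-L_1)^\top(I_N\otimes\bar\Phi)(L_3-L_1)$, the term $\sigma L_3^\top(\mathcal H\otimes\bar\Phi)L_3$, and the symmetrized slack contribution. At this stage the decrease condition is bilinear: it contains products of $P$, of $K$ (hence of $\Phi$), and of the \emph{unknown} matrices $A_{\rm tr},B_{\rm tr}$.

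To convexify, I would introduce the slack matrix $\mathcal R=(L_1+\epsilon L_2)^\top$ and carry out a Finsler/elimination-type manipulation together with the change of variables $K=K_GG^{-1}$, $\Phi=(G^{-1})^\top\bar\Phi G^{-1}$ (with $G$ nonsingular). This absorbs the one-step equality without creating products of decision variables with unknowns, makes the negativity requirement affine in the data matrices $\Delta_i,U_i,\Delta_{i+}$, and yields exactly the bordered block pattern of \eqref{eqthm2}: the off-diagonal block $\mathcal T$ carrying the $GL_1$- and $K_GL_3$-terms, the diagonal block $\Omega+\Psi$, and the data block $\widetilde\Theta_i$. Since the true pair $[A_{\rm tr}\ B_{\rm tr}]$ is known only to lie in the QMI set $\Sigma_i$ of Lem.~\ref{generallem} --- which is well posed by $Q_d\prec0$ and the rank condition of Assumption~\ref{ass_Theta} --- I would finally invoke the matrix S-lemma \cite{Waarde2022}, exactly as in the proof of Thm.~\ref{thm1}, to show that ``the decrease condition holds for every $[A\ B]\in\Sigma_i$'' is equivalent to the existence of a scalar $\beta>0$ making the left-hand side of \eqref{eqthm2} negative definite, for each $i=1,\dots,N$. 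Feasibility of \eqref{eqthm2} then gives $V(\delta(t+1))\le V(\delta(t))-c\|\delta(t)\|^2$ for some $c>0$, uniformly over the admissible systems, so $\delta(t)\to0$ and hence $\|x_i(t)-x_0(t)\|\to0$ for all $i$; the gains are recovered as $K=K_GG^{-1}$ and $\Phi=(G^{-1})^\top\bar\Phi G^{-1}$.

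I expect the convexification to be the main obstacle. The Lyapunov inequality is bilinear in $P$, in $K$ (hence $\Phi$), and in the unknown $A_{\rm tr},B_{\rm tr}$, and it is coupled across agents through $\mathcal H$, so the slack variable $\mathcal R$, the scalar $\epsilon$, and the $G$-based reparametrization must be arranged so that (i) the one-step equality is eliminated without introducing products of unknowns with decision variables, and (ii) the residual quadratic-in-$[A\ B]$ form lands in precisely the single-QMI shape (with $Q_d\prec0$) for which the matrix S-lemma is lossless, so that the ``if'' direction yields genuine stability rather than a conservative surrogate. A secondary difficulty is the bookkeeping of the asynchronous inter-event intervals, which must be handled so that the triggering inequality guaranteed by \eqref{trigger_data} is legitimately available at each step entering the telescoping of $V$.
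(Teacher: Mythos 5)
Your proposal is correct and follows essentially the same route as the paper's proof: a Lyapunov function $V=s^\top(I_N\otimes P)s$, a descriptor/Finsler-type slack with $\mathcal{R}=(L_1+\epsilon L_2)^\top$ and the reparametrization $K_G=KG$, $\bar\Phi=G^\top\Phi G$ to reach the block structure of \eqref{eqthm2}, the triggering inequality supplied by Thm.~\ref{thm1} to bound the held-state error, and the full-block S-procedure over the QMI set $\Sigma_i$ to absorb the unknown $[A\;B]$. The paper phrases the elimination step via the explicit coordinate change $\delta_i=Gs_i$ and the descriptor method of Fridman, but this is the same manipulation you describe, and it only invokes the sufficiency direction of the S-lemma, so your concern about losslessness is not needed for the stated ``if'' claim.
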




\begin{proof}
First, let $\delta_i(t) = G s_i(t)$, where $G\in \mathbb{R}^{n\times n}$ is assumed nonsingular. For $t\in \mathbb{N}_{[t_k^i,~t_{k+1}^i-1]}$, the dynamics of uncertain tracking error system can be reconstructed as
\begin{equation}
\label{s}
	s(t+1)=(I_N\otimes G^{-1}AG)s(t) +(\mathcal{H}\otimes G^{-1}BK_G)\bar{s}(t)
\end{equation}
where $\mathcal{H}=\mathcal{L}+\mathcal{P}$, $s(t)=[s_1^\top (t),\,s_2^\top (t), \ldots,\,s_N^\top (t)]^\top$, $\bar{s}(t)=[\bar{s}_1^\top (t),\,\bar{s}_2^\top (t), \ldots,\,\bar{s}_N^\top (t)]^\top$ with $\bar{s}_i(t)=s_i(t_k^i)$, and $K_{G}=KG$.
The system \eqref{s} exhibits the same characteristics as \eqref{delta} in terms of stability and performance.

Consider the following  Lyapunov candidate function
\begin{equation*}
	V(t)=s^\top(t)(I_N\otimes P)s(t)
\end{equation*}
where $P\succ 0$.
The forward difference $\Delta V(t):=V(t+1)-V(t)$ along the trajectory of \eqref{s} yields that
\begin{equation}
\label{V1}
	\Delta V=\zeta^\top(s,t)\left(L_2^\top(I_N\otimes P)L_2-L_1^\top(I_N\otimes P)L_1 \right)\zeta(s,t)
\end{equation}
with $\zeta(s,t):=[s^\top(t),~s^\top(t+1),~\bar{s}^\top(t)]^\top$.

Applying the descriptor method \cite{Fridman2001}, the system \eqref{s} can be expressed as follows
\begin{align}
\label{V2}
	&~\quad2[s(t)+\epsilon s(t+1)]^\top (I_N\otimes G)[(I_N\otimes G^{-1}AG)s(t)\nonumber \\
	&\quad~~+(\mathcal{H}\otimes G^{-1}BK_G)\bar{s}(t)-s(t)]\nonumber\\ 
	&~~~~~~~=2\zeta(t)^\top (I_N\otimes \mathcal{R})\Big[(I_N\otimes AGL_1)+(\mathcal{H}\otimes BK_GL_3)\nonumber \\ 
&\quad\quad\quad\quad\quad\quad\quad\quad\quad~~~~~~-(I_N\otimes GL_2) \Big] \zeta(t)=0
\end{align}
where $\mathcal{R}=(L_1+\epsilon L_2)^\top$.

Evidently, Thm. \ref{thm1} indicates that our data-driven STM \eqref{trigger_data} guarantees \eqref{trigger}, so for $t\in \mathbb{N}_ {[t_k^i,~t_{k+1}^i-1]}$ one obtains
\begin{equation}
\label{V3}
\begin{split}
	\zeta^\top(s,t)&\Big[-(L_3-L_1)^\top(I_N\otimes {\bar{\Phi}})(L_3-L_1)\\
	&~+\sigma\left(L_3^\top ({\mathcal{H}\otimes \bar{\Phi}})L_3\right) \Big]\zeta(s,t)\geq0.
	\end{split}
\end{equation}

By summing up \eqref{V1}-\eqref{V3}, $\Delta V(t)$ is bounded by
\begin{equation}
\label{V4}
	\Delta V\leq \zeta^\top(s,t)\Upsilon \zeta(s,t),
\end{equation}
where $\Upsilon:=\Omega + \Psi + {\rm Sym}\big\{(I_N\otimes \mathcal{R}AGL_1)+({\mathcal{H}}\otimes \mathcal{R}BK_GL_3)\big\}$.
In addition, $\Upsilon$ can be rewritten as 
\begin{align}
	\Upsilon:=\left[\begin{matrix} {[A_N\; B_N]^\top } \\ I\end{matrix} \right]^\top \left[\begin{matrix} \mathbf{0}&\mathcal{T}\\ \ast &\Omega+\Psi\end{matrix} \right]
	\left[\begin{matrix} {[A_N\; B_N]^\top } \\ I\end{matrix} \right]
\end{align}
with $A_N = I_N\otimes \mathcal{R}A$ and $B_N = I_N\otimes \mathcal{R}B$.

Recall the general data-driven representation of MASs in Lem.~\ref{generallem}. For all $[A\; B]\in \Sigma_i$, it follows that
\begin{equation}
\label{AB}
	\left[\begin{matrix}{[A\; B]^\top } \\ I\end{matrix}\right]^\top  \Theta_i \left[\begin{matrix}{[A\; B]^\top } \\ I\end{matrix}\right] \succeq 0.
\end{equation}
Pre- and post-multiplying \eqref{AB} by $\mathcal{R}$ and $\mathcal{R}^\top$ yield
\begin{equation}\label{eq:rarb}
	\left[\begin{matrix} \mathcal{R}A & \mathcal{R}B&I\end{matrix}\right]	\left[\begin{matrix} I & \mathbf{0}\\\mathbf{0}& \mathcal{R}\end{matrix}\right]
	{\Theta}_i\left[\begin{matrix} I & \mathbf{0}\\\mathbf{0}& \mathcal{R}\end{matrix}\right]^\top
	\left[\begin{matrix} (\mathcal{R}A)^\top \\ (\mathcal{R}B)^\top\\I\end{matrix}\right] \succeq 0.
\end{equation}
Then, for $i=1,2,\ldots,N$, \eqref{eq:rarb} can be written in the following compact form
\begin{equation*}
	\left[\begin{matrix} {[A_N\; B_N]^\top } \\ I\end{matrix}\right]^\top 
	( I_N\otimes\widetilde{\Theta}_i)
	\left[\begin{matrix} {[A_N\; B_N]^\top } \\ I\end{matrix}\right] \succeq 0
	\end{equation*}
where $\widetilde{\Theta}_i:=\left[\begin{matrix} I & \mathbf{0}\\\mathbf{0}& \mathcal{R}\end{matrix}\right]
{\Theta}_i\left[\begin{matrix} I & \mathbf{0}\\\mathbf{0}& \mathcal{R}\end{matrix}\right]^\top$.

Thus, we use the full-block S-procedure again to conclude that the condition $\Upsilon\prec 0$ holds for any $[A\; B]\in \Sigma_i$ if there exists a scalar $\beta > 0$ such that \eqref{eqthm2} holds.
 
The LMI \eqref{eqthm2} provides a sufficient condition for $\Upsilon\prec 0$.
Moreover, $\Upsilon\prec 0$ implies $\Delta V < 0$. 
It is immediate that the LMI \eqref{eqthm2} guarantees the stability of \eqref{s} for all $[A\; B]\in \Sigma_i$.
It follows from the same characteristics of \eqref{delta} and \eqref{s} that the stability of the closed-loop system \eqref{delta} is ensured.
Therefore, it can be drawn that if there exists some scalar $\beta > 0$ such that \eqref{eqthm2} holds, the tracking error $\delta_i(t)\rightarrow 0$ as $t\rightarrow \infty$ for all $[A\; B]\in \Sigma_i$.
Since the set $\Sigma_i$ contains the true system matrices $A_{\rm tr}$, $B_{\rm tr}$, the MAS \eqref{mas} achieves leader-following consensus asymptotically, completing the proof.
\end{proof}

Thm.~\ref{thm2} provides a tractable tool in terms of LMIs to compute a stabilizing controller gain $K$ and a triggering matrix $\Phi$ directly from data.   
Intuitively speaking, if the sufficient condition $\eqref{eqthm2}$ is satisfied, leader-following tracking can be achieved under the proposed data-driven STC \eqref{controller} and \eqref{trigger_data}.

	\begin{remark}[\emph{Model-based consensus and design}]
		\label{rmk:model}
		The asymptotic stability of the model-based system \eqref{delta} under the STM \eqref{trigger} can be guaranteed via the LMI $\Upsilon\prec 0$ by replacing $[A\; B]$ with the true matrices $[A_{tr}\;B_{tr}]$.
	\end{remark}

\begin{remark}[\emph{Discussion on parameter selection}]\label{rmk:parameter}
		In Thm. \ref{thm2}, the optimization of scalar $\beta$ and matrices $P$, $\bar{\Phi}$, $G$, and $K_G$ is accomplished by solving the LMIs in \eqref{eqthm2}, for carefully chosen parameters $\sigma$ and $\epsilon$. The feasibility of \eqref{eqthm2} hinges upon the existence of suitable values for these parameters.
		Consequently, assuming the existence of feasible solutions in Thm. \ref{thm2}, we proceed to design appropriate values for $\sigma$ and $\epsilon$ that ensure the desired system performance while minimizing sampled-data transmissions. To this end, we suggest the following considerations. By reducing $\sigma$, both the transmission frequency and the rate of state convergence increase, mirroring the impact of parameter $\delta$ discussed in \cite{Peng2013}. To achieve a balance between transmission frequency and system performance, one can employ a grid search approach within the LMI solver. Specifically, starting with a small $\sigma$, it can be incrementally increased until the desired trade-off is achieved.
		Moreover, as mentioned in \cite{Wang2022self}, the scalar $\epsilon$ incorporated within the matrix $\mathcal{R}= [I_n, \epsilon I_n, \mathbf{0}_{n\times n}]^\top$ is introduced to enhance the feasibility of Thm. \ref{thm2}. The selection of $\epsilon$ can follow a similar approach to that of $\sigma$. By gradually adjusting its value, a suitable balance can be attained, ensuring both feasibility and desirable system performance.
\end{remark}

\begin{remark}[\emph{Static data-driven STC design}]
\label{rmk:static}
The proposed data-driven STC adopts a static approach for co-designing the controller and the STM.
Specifically, the controller gain $K$ and the triggering matrix $\Phi$ are designed offline from pre-collected data $\{x_i(T)\}_{T=0}^{\rho}$, $\{u_i(T)\}_{T=0}^{\rho-1}$ of agent $i$, and subsequently implemented in real time to achieve asymptotic consensus of the unknown MAS \eqref{mas}.
Considering the homogeneity of the considered MAS, under the same upper bound on noise, the set of allowable systems corresponding to the obtained stability conditions includes all agents in the network. 
Thus, the asymptotic consensus can be achieved by designing the controller using the data from any agent.
\end{remark}

\begin{remark}[\emph{Direct/indirect approach}]
\label{indirect}
When lacking knowledge of system dynamics, our data-driven STC, as a \emph{direct approach}, offers a new perspective to system analysis and design directly from data. 
Yet, a competing alternative to solve this conundrum is the so-called \emph{indirect approach}, consisting of system identification and model-based STC. 
Although the indirect approach is modular and well-understood, system modeling and identification could be inaccurate and time-consuming, especially when dealing with large-scale systems with scarce, noisy data \cite{Baggio2021}.
\end{remark}

\subsection{Comparison with the model-based STC}
\label{section3d}
According to the data-driven setup in Sec.~\ref{section2c}, we gather state-input data from the \emph{perturbed} open-loop system \eqref{mas_noise} locally in order to simulate the noise, which is inherent to all real-world experiments when recording data.
Our subsequent analysis in Thm.~\ref{thm2} focuses on the stability of the \emph{unperturbed} closed-loop system \eqref{delta}.
In this part, for the sake of comparing our proposed data-driven STC with the model-based one  fairly, we consider the MAS \eqref{mas} with noise
\begin{equation}
\label{mas_perturb}
\begin{split}
	\left\{\begin{aligned}
	{x}_{i}(t+1)\!&=\!A_{\rm tr} x_{i}(t)+B_{\rm tr} u_{i}(t)+B_d d_i(t), \, i=1,2,\ldots,N \\ 
	{x}_{0}(t+1)\!&=\!A_{\rm tr} x_{0}(t)\end{aligned}\right.
	\end{split}
\end{equation}
where  $B_d$ is a known matrix, and $d_i(t)\in \mathcal{R}^n$ is the external disturbance obeying $d_i(t)\in \mathcal{L}_2 [0,\infty ]$. Here, the communication topology of the MAS \eqref{mas_perturb} adheres to Assumption~\ref{graph}. 


Recalling the definition of $e_i(s_k^i)$ of follower $i$,  one has for $t\in\mathbb{N}_{[t_k^i,\,t_{k+1}^i-1]}$ that
\begin{equation}
\label{e1}
	e_{i}(s_k^i)=\eta_1(s_k^i,\,x_i(t_k^i))+\eta_2(s_k^i,\,z_i(t_k^i))+\eta_3(s_k^i,\,d_i(t))
\end{equation}
where 
\begin{align*}
\eta_1(s_k^i,\,x_i(t_k^i))&=(A_{\rm tr}^{s_k^i}-I)x_i(t_k^i), \\
\eta_2(s_k^i,\,z_i(t_k^i))&=\underline{B}_{\rm tr}^{s_k^i}\underline{K}^{s_k^i}z_i(t_k^i),\\
\eta_3(s_k^i,\,d_i(t))&=\sum_{\tau=t_k^i}^{t_k^i+s_k^i-1}A_{\rm tr}^{t-\tau-1}B_d d_i(\tau).
\end{align*}

Here, we remark that at the current triggering time $t_k^i$, we do not know  $e_i(s_k^i)$ because the disturbance $d_i(t)$ is unknown. 
The following lemma presents a method to estimate the upper bound of $e_i(s_k^i)$. For notational brevity, we use $\eta_1$, $\eta_2$, and $\eta_3$ to represent $\eta_1(s_k^i,\,x_i(t_k^i))$, $\eta_2(s_k^i,\,x_i(t_k^i))$, and $\eta_3(s_k^i,\,x_i(t_k^i))$, respectively.

\begin{lemma}[Model-based triggering condition with noise]
\label{upper}
Consider the leader-following MAS \eqref{mas_perturb} and the model-based STC scheme \eqref{controller}, \eqref{trigger} under graph $\bar{\mathcal{G}}$. Suppose Assumption~\ref{graph} holds. For $t\in\mathbb{N}_{[t_k^i,\,t_{k+1}^i-1]}$, if $\sup\|d_i(t)\|\leq \bar{d}_i(t)$ for follower $i$, the following condition holds
\begin{equation*}
	e_{i}^\top(s_k^i) \Phi e_{i}(s_k^i)\leq \eta_0(s_k^i,\,x_i(t_k^i),\bar{d}_i(t))
\end{equation*}
where $
	\Xi =\sum_{\tau=t_k^i}^{t_k^i+s_k^i-1}\Phi^{\frac{1}{2}}A_{\rm tr}^{t-\tau-1}B_d$, and $
	\eta_0 =2(\eta_1+\eta_2)^\top  \Phi(\eta_1+\eta_2)+2\bar{d}^\top _i\Xi^\top \Xi \bar{d}_i$.
\end{lemma}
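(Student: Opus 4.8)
The plan is to start from the decomposition of the triggering error $e_i(s_k^i)$ in \eqref{e1}, namely $e_i(s_k^i)=\eta_1+\eta_2+\eta_3$, and bound the quadratic form $e_i^\top(s_k^i)\Phi e_i(s_k^i)$ from above by splitting off the disturbance-dependent part $\eta_3$ from the deterministic part $\eta_1+\eta_2$. Since $\Phi\succ 0$, I would write $e_i^\top(s_k^i)\Phi e_i(s_k^i)=\|\Phi^{1/2}(\eta_1+\eta_2)+\Phi^{1/2}\eta_3\|^2$ and apply the elementary inequality $\|a+b\|^2\le 2\|a\|^2+2\|b\|^2$ (equivalently, Young's inequality $2a^\top b\le a^\top a+b^\top b$ applied to the cross term). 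This immediately yields
\begin{equation*}
e_i^\top(s_k^i)\Phi e_i(s_k^i)\le 2(\eta_1+\eta_2)^\top\Phi(\eta_1+\eta_2)+2\eta_3^\top\Phi\eta_3 .
\end{equation*}

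Next I would handle the term $2\eta_3^\top\Phi\eta_3$. Recalling $\eta_3=\sum_{\tau=t_k^i}^{t_k^i+s_k^i-1}A_{\rm tr}^{t-\tau-1}B_d d_i(\tau)$, I write $\Phi^{1/2}\eta_3=\sum_{\tau}\Phi^{1/2}A_{\rm tr}^{t-\tau-1}B_d d_i(\tau)$. Under the bound $\|d_i(\tau)\|\le\bar d_i(t)$ over the interval, a triangle-inequality / Cauchy–Schwarz argument on the finite sum gives $\|\Phi^{1/2}\eta_3\|\le \big(\sum_{\tau}\Phi^{1/2}A_{\rm tr}^{t-\tau-1}B_d\big)$-type bound; matching the statement, the cleanest route is to factor out the common (worst-case) magnitude and write $\eta_3^\top\Phi\eta_3\le \bar d_i^\top\Xi^\top\Xi\bar d_i$ with $\Xi=\sum_{\tau=t_k^i}^{t_k^i+s_k^i-1}\Phi^{1/2}A_{\rm tr}^{t-\tau-1}B_d$, interpreting $\bar d_i$ as the vector bound so that $\Xi\bar d_i$ dominates $\Phi^{1/2}\eta_3$ entrywise/in norm. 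Combining the two bounds gives exactly $e_i^\top(s_k^i)\Phi e_i(s_k^i)\le 2(\eta_1+\eta_2)^\top\Phi(\eta_1+\eta_2)+2\bar d_i^\top\Xi^\top\Xi\bar d_i=\eta_0(s_k^i,x_i(t_k^i),\bar d_i(t))$, which is the claim.

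The main obstacle — really the only subtle point — is making the passage from $\eta_3^\top\Phi\eta_3$ to $\bar d_i^\top\Xi^\top\Xi\bar d_i$ rigorous, since $\Xi$ is defined as a \emph{sum} of matrices rather than a sum of their norms, and the disturbances $d_i(\tau)$ at different times $\tau$ need not be aligned. One must either interpret $\bar d_i(t)$ as a uniform scalar bound and absorb the sign/direction freedom into a worst-case alignment (so that $\|\Phi^{1/2}\eta_3\|\le\|\Xi\|\,\bar d_i$ with the convention that $\bar d_i$ plays the role of a vector of that magnitude), or invoke that the disturbance achieving the worst case can be taken collinear across $\tau$, for which $\Phi^{1/2}\eta_3=\Xi d$ with $\|d\|\le\bar d_i$. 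I expect the author's proof to simply adopt the collinear / uniform-bound convention, so the formal gap is cosmetic; everything else is a routine application of $\|a+b\|^2\le 2\|a\|^2+2\|b\|^2$ and positive-definiteness of $\Phi$.
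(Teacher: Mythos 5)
Your proposal follows essentially the same route as the paper's proof: the paper expands the quadratic form into three terms and applies Young's inequality to the cross term (exactly equivalent to your $\|a+b\|^2\le 2\|a\|^2+2\|b\|^2$ step with $a=\Phi^{1/2}(\eta_1+\eta_2)$, $b=\Phi^{1/2}\eta_3$), and then asserts $\eta_3^\top\Phi\eta_3\le\bar d_i^\top\Xi^\top\Xi\bar d_i$ directly. The subtlety you correctly flag --- that $\Xi$ is a sum of matrices rather than of their norms, so substituting the single worst-case vector $\bar d_i$ for the time-varying $d_i(\tau)$ is not rigorous as stated --- is present, and left unaddressed, in the paper's own proof as well.
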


\begin{proof}
It follows from \eqref{e1} that
\begin{equation}
\label{e2}
\begin{split}
	e_{i}^\top(s_k^i) \Phi e_{i}(s_k^i)&=(\eta_1+\eta_2+\eta_3)^\top \Phi(\eta_1+\eta_2+\eta_3)\\ 
	&=(\eta_1+\eta_2)^\top \Phi(\eta_1+\eta_2)\\
	&\quad+2(\eta_1+\eta_2)^\top  \Phi \eta_3	+\eta_3^\top \Phi \eta_3.
	\end{split}
\end{equation}

The second term in \eqref{e2} can be bounded as follows
\begin{equation}
\label{e3}
	2(\eta_1+\eta_2)^\top  \Phi \eta_3 \leq (\eta_1+\eta_2)^\top  \Phi(\eta_1+\eta_2)+\eta_3 ^\top \Phi\eta_3.
\end{equation}

Moreover, the third term in \eqref{e2} obeys
\begin{align*}
	\eta_3^\top \Phi \eta_3=&\sum_{\tau=t_k^i}^{t_k^i+s_k^i-1} d^\top _i(\tau){A_{\rm tr}^{t-\tau-1}B_d}^\top \Phi \\
	&\quad \times \sum_{\tau=t_k^i}^{t_k^i+s_k^i-1} A_{\rm tr}^{t-\tau-1}B_d d_i(\tau)\leq \bar{d}^\top _i(t) \Xi^\top \Xi \bar{d}_i(t)
\end{align*}
where $\bar{d}_i(t)$ is an upper bound on $d_i(t)$.
Substituting the above inequality and \eqref{e3} into \eqref{e2} yields 
\begin{equation*}
	e_{i}^\top(s_k^i) \Phi e_{i}(s_k^i)\leq 2(\eta_1+\eta_2)^\top  \Phi(\eta_1+\eta_2)+2\bar{d}^\top _i\Xi^\top \Xi \bar{d}_i,
\end{equation*}
which completes the proof.
\end{proof}


By means of Lem.~\ref{upper}, each follower $i$ computes $s_k^i$ at $t_k^i$ by solving
\begin{equation*}
	\eta_0(s_k^i,\,x_i(t_k^i),\bar{d}_i(t))=\sigma z_i^\top(t_k^i) \Phi z_{i}(t_k^i)
\end{equation*}
which gives rise to the next triggering time $t_{k+1}^i=s_k^i+t_k^i$. In other words, our model-based STM for the perturbed MAS \eqref{mas_perturb} can be restated as follows 
\begin{equation}
\label{trigger3}
t_{k+1}^i=t_k^i+\inf_{s_k^i\in \mathbb{N}} \left\{s_k^i\geq 1\big|\tilde{f}_i\left(x_i(t_k^i),\,z_i(t_k^i),\, s_k^i,\bar{d}_i(t)\right)\geq0 \right\}
\end{equation}
with the following triggering function
\begin{align*}
	\tilde{f}_i(x_i(t_k^i),\,z_i(t_k^i),\,s_k^i,\bar{d}_i(t))	&=\eta_0(s_k^i,\,x_i(t_k^i),\bar{d}_i(t))\nonumber
	\\&\quad -\sigma z_i^\top(t_k^i) \Phi z_{i}(t_k^i).
\end{align*}

After adjusting the model-based STM, we turn our attention to the controller design and stability analysis of the perturbed MAS \eqref{mas_perturb}. 
First, associated with the feedback controller \eqref{controller}, the closed-loop tracking error system of agent $i$ is given by
\begin{equation*}
\delta_i(t+1)=A_{\rm tr}\delta_i(t)+B_{\rm tr}Kz_i(t_k^i)+B_d d_i(t),~t\in\mathbb{N}_{[t_k^i,\,t_{k+1}^i-1]}.
\end{equation*}

Consider the definition of distributed $\mathcal{H}_{\infty}$-consensus control for MASs in \cite[Def. 2]{Elahi2019}. 
Leader-following consensus of MAS \eqref{mas_perturb} with external disturbance $d_i(t)$ is transformed into the robust $\mathcal{H}_{\infty}$-control of the tracking error system.
Under the model-based STC scheme \eqref{controller} and \eqref{trigger3}, the following model-based $\mathcal{H}_{\infty}$ stability condition comes ready whose proof is similar to that of the model-based approach in \cite{Elahi2019} and is thus omitted here.


\begin{thm}[Model-based $\mathcal{H}_{\infty}$-consensus and design]
\label{thm3}
Consider the leader-following MAS \eqref{mas_perturb} under graph $\bar{\mathcal{G}}$. Suppose Assumption~\ref{graph} holds. For a given scalar $\sigma>0$ and a disturbance attenuation $\gamma>0$, the $\mathcal{H}_{\infty}$-consensus is achieved for any initial state with the feedback controller \eqref{controller} and the model-based STM \eqref{trigger3}, if there exist positive definite matrices as in Thm.~\ref{thm2} such that the following LMI is satisfied
\begin{equation*}
	\left[\begin{matrix} \widetilde{\Upsilon}+(I_N\otimes L_1^\top L_1)  &I_N\otimes\mathcal{R}B_dG \\ *&-\gamma^2I \end{matrix}\right]\prec0
\end{equation*}
where $\widetilde{\Upsilon}:=\Omega + \Psi + {\rm Sym}\big\{(I_N\otimes \mathcal{R}A_{\rm tr}GL_1)+(\mathcal{H}\otimes \mathcal{R}B_{\rm tr}K_GL_3)\big\}.$ Thus, $K=K_GG^{-1}$ is the  controller gain.
\end{thm}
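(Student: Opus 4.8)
The plan is to adapt, essentially verbatim, the Lyapunov--descriptor argument behind Thm.~\ref{thm2} and its model-based counterpart, while carrying along the disturbance channel and closing the loop with a dissipation inequality. First I would recall from \cite[Def.~2]{Elahi2019} that $\mathcal{H}_\infty$-consensus of \eqref{mas_perturb} at level $\gamma$ is equivalent to: (a) asymptotic leader-following consensus when $d_i\equiv 0$, and (b) $\sum_{t\ge 0}\|\delta(t)\|^2<\gamma^2\sum_{t\ge 0}\|d(t)\|^2$ for every $d\in\ell_2$ under zero initial conditions, where $\delta(t)$ stacks the tracking errors $\delta_i(t)=x_i(t)-x_0(t)$ and $d(t)$ stacks the $d_i(t)$. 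Using the change of variables $\delta_i(t)=Gs_i(t)$ with nonsingular $G$, the closed-loop error system under the controller \eqref{controller} becomes the disturbed analogue of \eqref{s}, namely $s(t+1)=(I_N\otimes G^{-1}A_{\rm tr}G)s(t)+(\mathcal{H}\otimes G^{-1}B_{\rm tr}K_G)\bar s(t)+(I_N\otimes G^{-1}B_d)d(t)$, which preserves stability/performance. Hence it suffices to produce $V(t)=s^\top(t)(I_N\otimes P)s(t)\succeq 0$ with $\Delta V(t)+\|\delta(t)\|^2-\gamma^2\|d(t)\|^2<0$ along trajectories.

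Then I would reuse the three building blocks of Thm.~\ref{thm2}. The forward difference of $V$ contributes the $\Omega$-term exactly as in \eqref{V1}. The descriptor identity \eqref{V2}, now applied to the disturbed dynamics, contributes $\mathrm{Sym}\{(I_N\otimes\mathcal{R}A_{\rm tr}GL_1)+(\mathcal{H}\otimes\mathcal{R}B_{\rm tr}K_GL_3)\}$ together with a new cross term linear in $d(t)$; collecting these with $\Omega$ and $\Psi$ reproduces $\widetilde{\Upsilon}$. The triggering ingredient is where the disturbance enters nontrivially: $e_i(s_k^i)$ is no longer a function of transmitted data alone, but Lem.~\ref{upper} bounds $e_i^\top(s_k^i)\Phi e_i(s_k^i)$ by $\eta_0(s_k^i,x_i(t_k^i),\bar d_i(t))$, and the model-based STM \eqref{trigger3} is designed precisely so that $\eta_0\le\sigma z_i^\top(t_k^i)\Phi z_i(t_k^i)$ on each inter-event interval $\mathbb{N}_{[t_k^i,t_{k+1}^i-1]}$ whenever $\|d_i(t)\|\le\bar d_i(t)$. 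Consequently $e_i^\top\Phi e_i\le\sigma z_i^\top\Phi z_i$ still holds, and the inequality \eqref{V3}, hence the $\Psi$-term, carries over unchanged.

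Summing these contributions and adding the supply rate $\|\delta(t)\|^2-\gamma^2\|d(t)\|^2$, I would obtain $\Delta V(t)+\|\delta(t)\|^2-\gamma^2\|d(t)\|^2\le\xi^\top(t)\,\Pi\,\xi(t)$ with $\xi(t):=[\zeta^\top(s,t),\,d^\top(t)]^\top$ and $\zeta(s,t)$ the augmented vector appearing in \eqref{V1}, where $\Pi$ is the $2\times 2$ block matrix whose $(1,1)$-block is $\widetilde{\Upsilon}+(I_N\otimes L_1^\top L_1)$ (the term $I_N\otimes L_1^\top L_1$ encoding $\|\delta(t)\|^2$ in the $\zeta$-basis, with the performance output taken in transformed coordinates so that the $G$-factors are absorbed), whose off-diagonal block is $I_N\otimes\mathcal{R}B_dG$, and whose $(2,2)$-block is $-\gamma^2 I$ --- that is, exactly the LMI of the statement. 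Negative definiteness of this LMI then gives $\Delta V<0$ for $d\equiv 0$, i.e., asymptotic consensus; and summing $\Delta V(t)+\|\delta(t)\|^2-\gamma^2\|d(t)\|^2<0$ over $t\ge 0$ with $V(0)=0$ and $V(\cdot)\ge 0$ yields $\sum\|\delta(t)\|^2<\gamma^2\sum\|d(t)\|^2$, the prescribed $\mathcal{H}_\infty$ level. The controller gain is recovered as $K=K_GG^{-1}$, exactly as in Thm.~\ref{thm2}.

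The main obstacle I anticipate is the triggering step: one has to verify that running the STM \eqref{trigger3} with the a priori bound $\bar d_i(t)$ enforces $e_i^\top\Phi e_i\le\sigma z_i^\top\Phi z_i$ for \emph{every} $t$ within the inter-event interval and not merely at its endpoint, so that \eqref{V3} is legitimately available at all instants --- this is exactly the content that Lem.~\ref{upper} supplies, and it must be invoked carefully. A secondary, purely bookkeeping difficulty is tracking the $G$-factors through the descriptor substitution so that the disturbance block emerges as $I_N\otimes\mathcal{R}B_dG$ and the output weight collapses to $I_N\otimes L_1^\top L_1$; once these are in place, the rest is the same block-matrix manipulation already used in Thm.~\ref{thm2}, which is why the detailed computation parallels \cite{Elahi2019}.
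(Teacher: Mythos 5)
Your proposal is correct and follows exactly the route the paper intends: the paper omits the proof of Thm.~\ref{thm3}, stating only that it parallels the model-based argument of \cite{Elahi2019}, and your Lyapunov--descriptor construction (reusing $\Omega$, $\Psi$, the descriptor identity with the added disturbance channel, Lem.~\ref{upper} to justify \eqref{V3} on each inter-event interval, and the dissipation inequality $\Delta V+\|\delta\|^2-\gamma^2\|d\|^2<0$) is precisely that argument extended from Thm.~\ref{thm2}. You also correctly flag the two genuinely delicate points --- validity of the triggering bound at every instant of the inter-event interval, and the $G$-factor bookkeeping in the output and disturbance blocks --- which is where any actual error would hide.
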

	
	\begin{remark}[\emph{Distributed control}]\label{rmk:distributed}
Both the controller \eqref{controller} and the data-driven STM \eqref{trigger_data} (model-based STM \eqref{trigger3}) are implemented online in a distributed fashion, where only local (neighbor-to-neighbor) communications are performed.
	Nonetheless, our design and stability analysis (c.f. Thms. \ref{thm2} and \ref{thm3}) rely on certain global information of the MAS, in terms of the Laplacian matrix associated with the communication graph.
	In this sense, our control paradigm involves centralized design and distributed execution.
	It is reasonable when considering a scenario, in which minimal global information is available offline and local data can be acquired during the system operation online, which has been widely studied in model-based STC works \cite{Yi2019,Cui2023,Zhang2022self,Zegers2022}.
	Exploring methods to eliminate the dependence on such global information in data-driven STC design is an area of focus for future research.
\end{remark}

	\begin{remark}[\emph{Consideration of offline/online data noise}]
	The presence of noisy data is explicitly taken into account during the offline data acquisition phase, but not during the online implementation of the designed controller. Similar settings have been popularly adopted in existing research on data-driven control, as evidenced by works in \cite{Waarde2022,Wildhagen2022,Wang2021,Wang2022self,Wang2023jas}.
	However, it is worth noting that it is straightforward to extend the analysis to encompass both online and offline noise sources, thereby providing data-driven robustness and $\mathcal{H}_{\infty}$-performance guarantees. This can be achieved by combining the model-based $\mathcal{H}_{\infty}$ stability condition  (Thm. \ref{thm3}) with the data-driven representation of MASs (Lem. \ref{generallem}).
\end{remark}

\begin{remark}[\emph{Comparison with prior art}]
	Several existing results have explored data-driven STC methods without utilizing system models, such as the trajectory prediction approach \cite{Wenjie2023} and the behavioral approach \cite{Wang2022self} for linear systems.
	In comparison with these works, our approach exhibits several key distinctions. 
	Firstly, our data-driven STC method is robust to noisy data, eliminating the need for noise-free data as required in \cite{Wenjie2023}. 
	Secondly, as mentioned in Remark \ref{rmk:static}, we propose a static co-design method utilizing historical data while ensuring system stability, thereby circumventing the need for real-time iterative computations using online data and reducing the computational burden.
	Finally, to the best of our knowledge, distributed data-driven STC using behavior theory has not been  investigated.
	Unlike the focus in \cite{Wenjie2023, Wang2022self} on single systems, our data-driven approach deals with the asymptotic consensus problem of MASs.
\end{remark}

\section{Simulation Results}
\label{section4}

In this section, we validate the data-driven results and conduct a comparative analysis with the system identification-based and model-based approaches. The numerical computations are carried out in MATLAB, utilizing the YALMIP optimization toolbox along with the SeDuMi solver \cite{Jos1999}.


Consider a MAS composed of six follower pendulums indexed by $1,\,2,\,\ldots,\,6$ and a leader pendulum indexed by $0$; see Fig.~\ref{fig1}.
The dynamics of each pendulum can be approximated by the following linearized equation \cite{Liu2021}
\begin{equation}
\label{simu1}
	\left\{\begin{aligned}
	m\ell^2\ddot{\alpha}_i&=-mg\ell\alpha_i -u_{i}, \quad i=1,\ldots,6, \\ 
	m\ell^2\ddot{\alpha}_0&=-mg\ell\alpha_0,\end{aligned}\right.
\end{equation}
where $g=9.8{\rm{m}}/{\rm{s}}^2$ denotes the gravitational acceleration constant, $m$ and $l$ are the mass and length of each pendulum, $\alpha_i$ and $\alpha_0$  are the pendulum angles of follower $i$ and the leader, and $u_i$ is the control torque of follower $i$.
Referring to \cite{Liu2021}, we select $\ell = 1$m and $m = 1$kg. 
Let $x_i=[{\alpha}_i~\dot{\alpha}_i]^\top$. 
The continuous linearized pendulum model in a periodic sampled-data setting can be described by the discrete-time MAS  \eqref{mas} with
$$A_{\rm tr}=\left[\begin{matrix}0.9980&0.02\\-0.1959&0.9980 \end{matrix}\right],\quad
B_{\rm tr}=\left[\begin{matrix}0.0002\\-0.02\end{matrix} \right]$$
with the sampling period being $T_k=0.02\rm{s}$.
Their communication topology is depicted in Fig.~\ref{fig2}, and satisfies Assumption~\ref{graph}. 

\begin{figure}[!htb]
\centering
\includegraphics[width=2.1in]{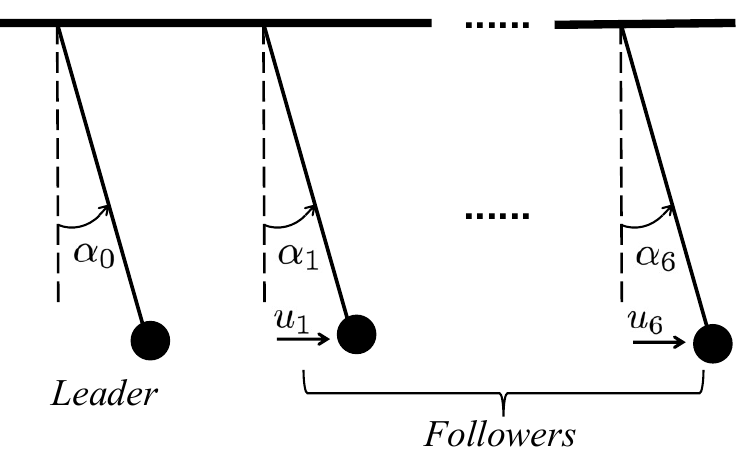}
\caption{The multi-pendulum system consisting of six followers and a leader.}
\label{fig1}
\end{figure}

\begin{figure}[!htb]
\centering
\includegraphics[width=2.1in]{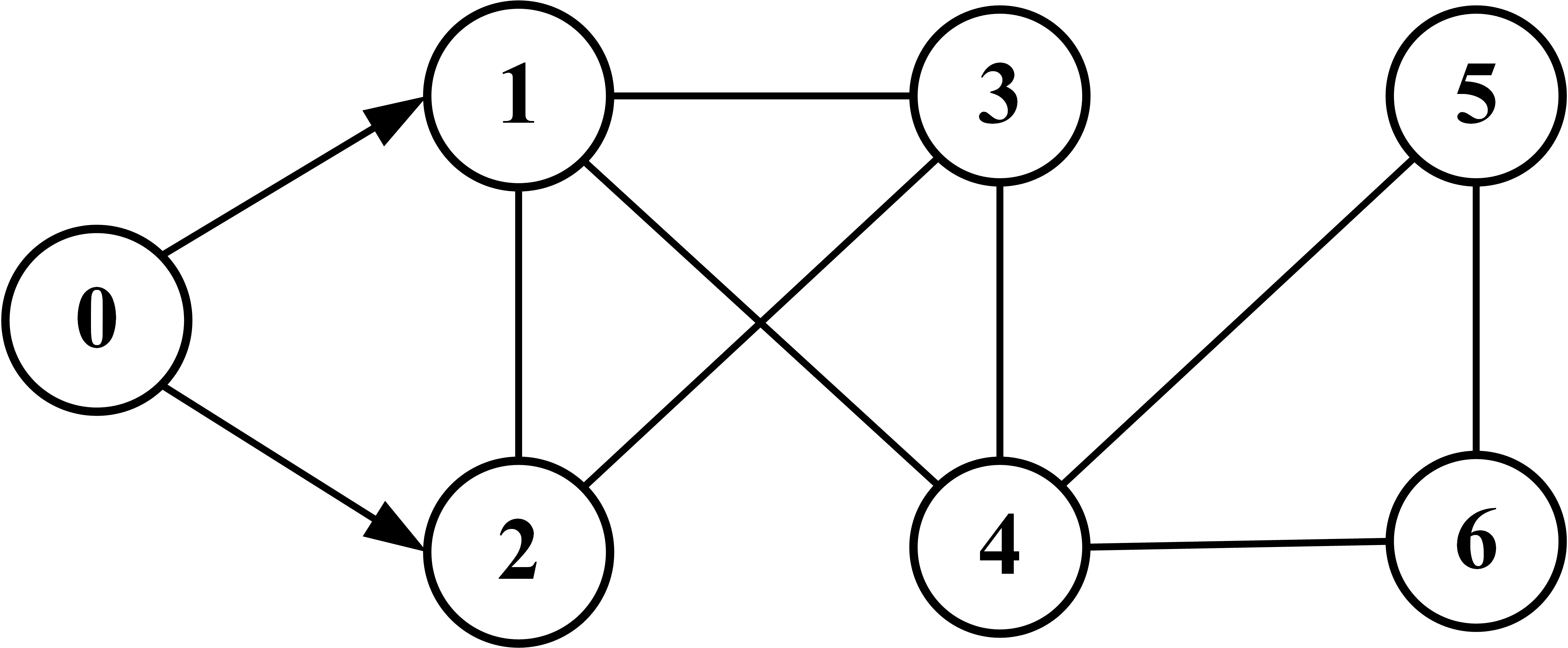}
\caption{The communication topology $\bar{\mathcal{G}}$ between pendulums.}
\label{fig2}
\end{figure}

\begin{figure}[!htb]
	\centering
	\subfloat[]{\includegraphics[scale=0.4]{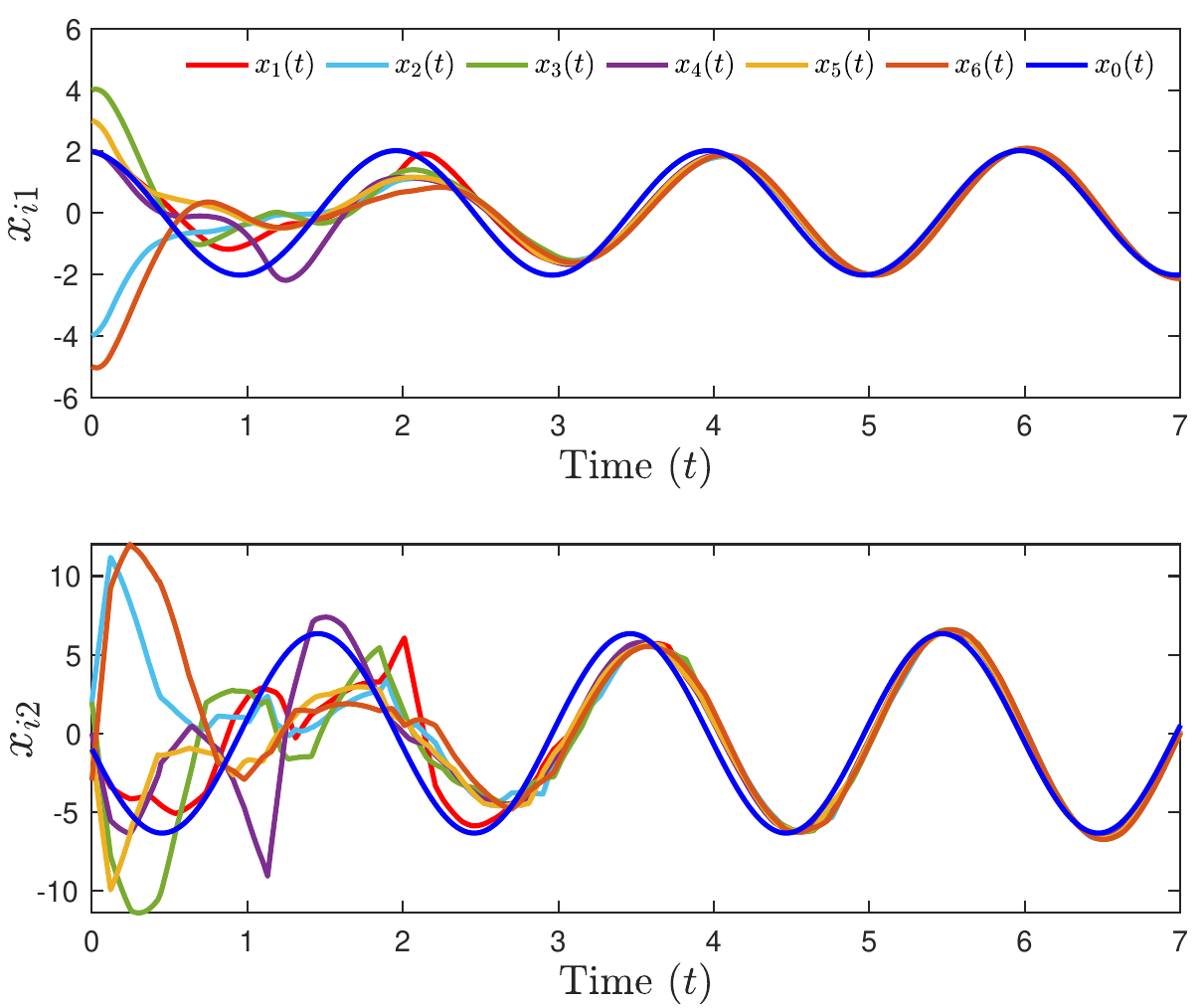}
		\label{fig3a}}\vspace{-4.1mm}
	\hfil
	\subfloat[]{\includegraphics[scale=0.4]{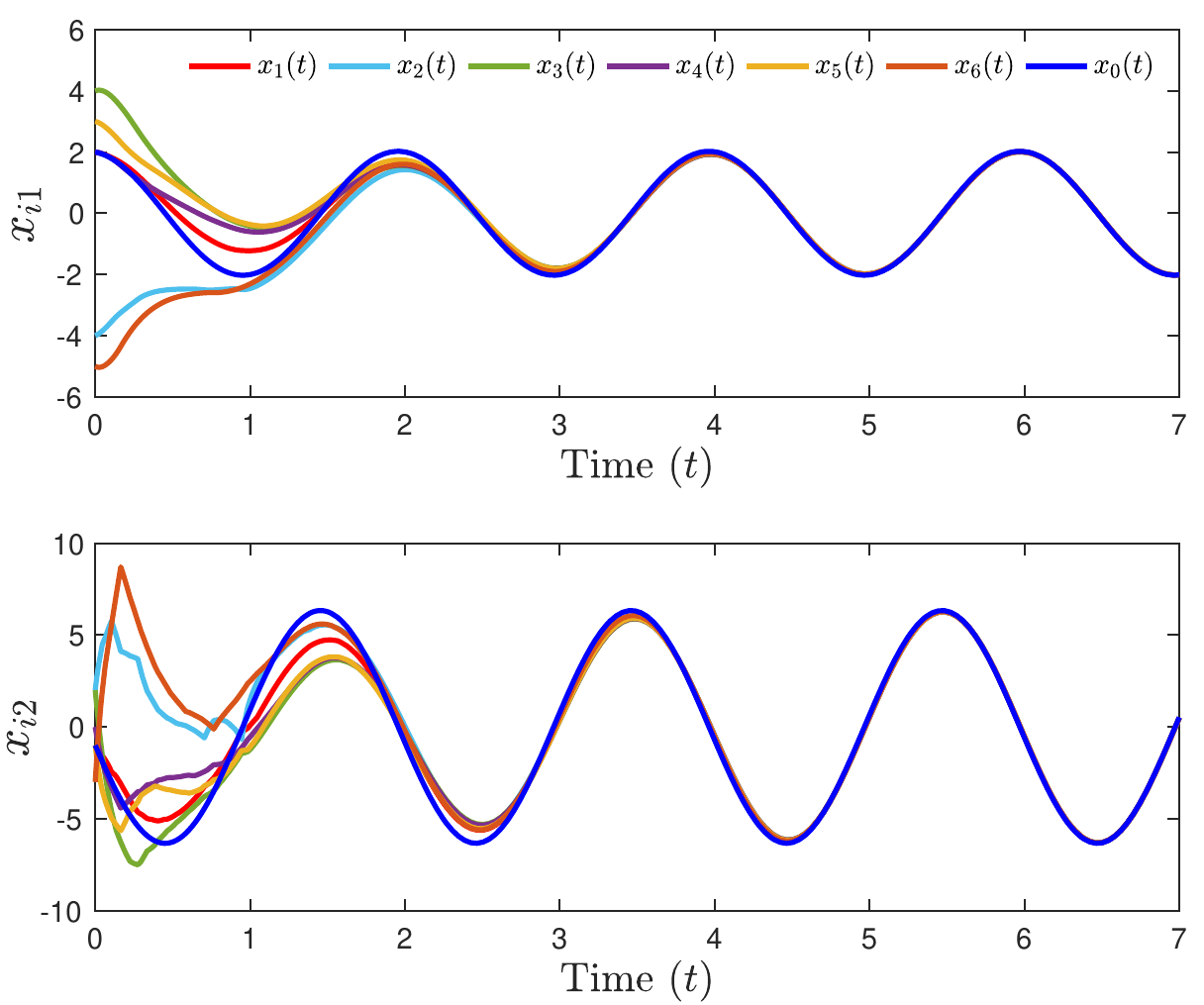}%
		\label{fig3b}}\vspace{-4.1mm}
	\hfil
	\subfloat[]{\includegraphics[scale=0.4]{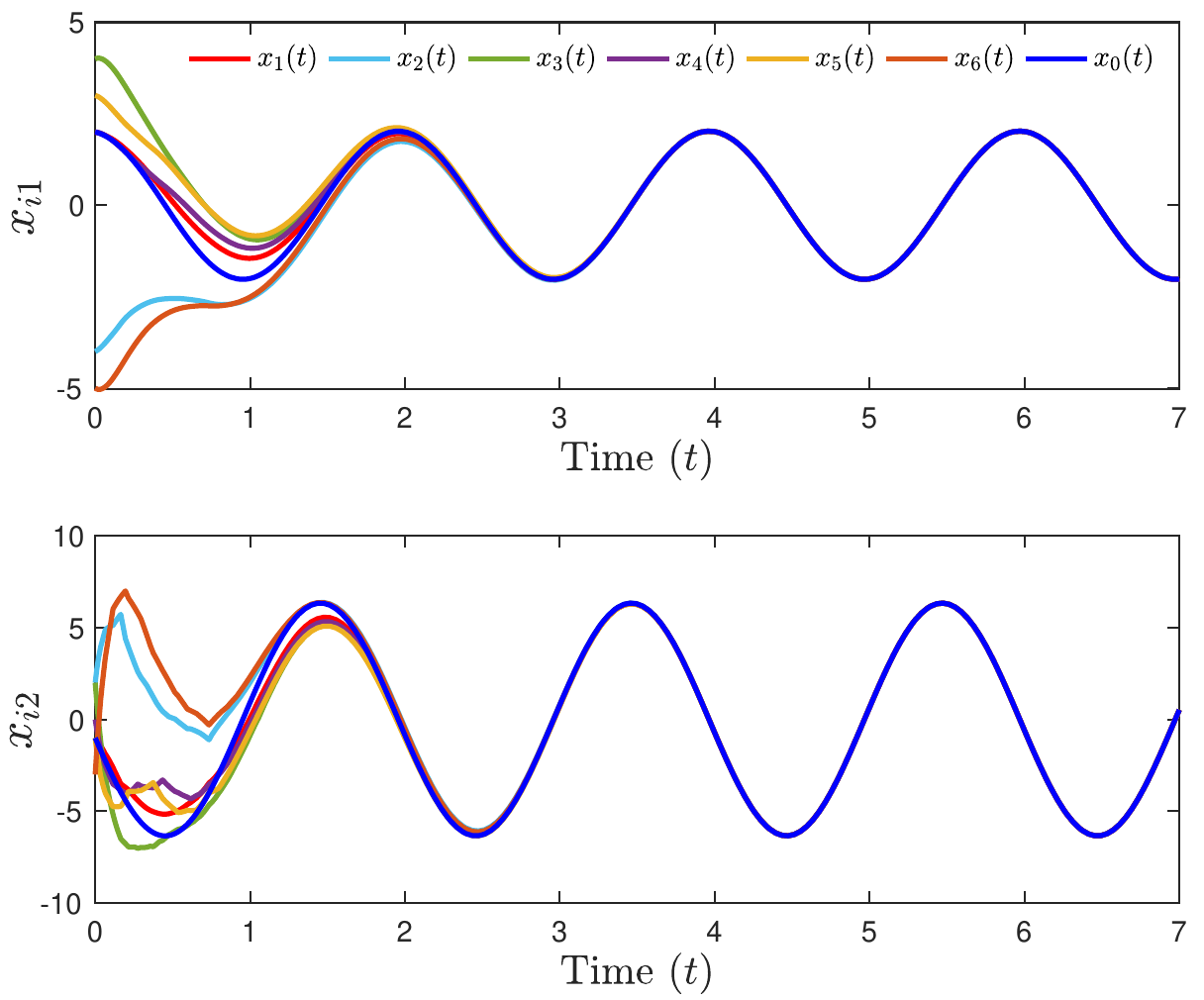}%
		\label{fig3c}}
	\caption{{The state trajectories of each pendulum under the data-driven STC: (a) $\rho=10$; (b) $\rho=80$; (c) $\rho=800$.}}
	\label{fig3}
\end{figure}
\subsection{Consensus under the data-driven STC}
\label{simudata}
The multi-pendulum system was first simulated by using the data-driven STC \eqref{controller} and \eqref{trigger_data}. 
The true system matrices $A_{\rm tr},B_{\rm tr}$ are assumed unknown in the data-driven approach. 
The state-input measurements $\{x_i(T)\}_{T=0}^{\rho}$, $\{u_i(T)\}_{T=0}^{\rho-1}$ for each pendulum are collected from the perturbed system \eqref{mas_noise}, where the data-generating input was generated uniformly from $u_i(t)\in [-1,\, 1]$. 
In order to evaluate the effect of data length on the control performance, varying data lengths including $\rho = 10$, $\rho = 80$ and $\rho= 800$ were employed. 
The noise samples $w_i(t)$ are bounded by $\|w_i(t)\|\leq \bar{w}(t)=0.01$ for all $t\in \mathbb{N}$ and $E=0.01I$, which can be described as a QMI with $Q_d=-I$, $S_d=0$, and $R_d=\bar{w}^2\rho I$ in Assumption \ref{noiseass}.
The parameters were selected as $\epsilon=2$ and $\sigma=0.2$. 
By solving the data-based LMI in Thm. \ref{thm2} for different data lengths $\rho$, the associated controller gain $K$ in \eqref{controller} and the triggering matrix $\Phi$ in \eqref{trigger_data} were found as follows
\begin{align*}
	\rho &=10, {K=[-12.1805\, -9.4375],\Phi=\left[\begin{matrix}
	77.9309&42.4900\\
	42.4900&83.9906
\end{matrix}\right]}\\
\rho &=80, {K=[-2.9531\, -3.1986],\Phi=\left[\begin{matrix}
	39.4607&13.6684\\
	13.6684&35.1941
\end{matrix}\right]}\\
\rho &=800, {K\!=\![-1.2594\, -0.7333], \Phi\!=\!\left[\begin{matrix}
	23.7346&\!6.4973\\
	6.4973&\!23.8443
\end{matrix}\right].}
\end{align*}
In addition, setting $\bar{s}=40$, we adopted \cite[Alg. 1]{Wildhagen2022}  to compute $\Theta_i^s$ for each follower and for all $s\in\mathbb{N}_{[1,\bar{s}]}$, and then obtained the data-driven STM.

Let the initial states of the pendulums be $x_0(0)=[2,\,-1]^\top$, $x_1(0)=[-4,\,2]^\top$, $x_2(0)=[4,\,2]^\top$, $x_3(0)=[2,\,0]^\top$, $x_4(0)=[3,\,-1]^\top$, $x_5(0)=[-5,\, -3]^\top$, and $x_6(0)=[2,\, 0.5]^\top$, respectively.
The state trajectories of each pendulum under the data-driven control with different data lengths are presented in {Fig.~\ref{fig3}}.
Obviously, leader-following consensus is achieved in all three settings, which validates the correctness of the data-driven STC. 
Besides, it can be observed that the larger the datasize, the better the control performance, in the sense that having a faster convergence speed, smoother trajectories, and smaller convergence errors.
This is because a larger dataset provides more information about the system dynamics, thereby reducing the uncertainties caused by noise. As a result, the set of allowable system matrices in $\Sigma_i$ becomes smaller as $\rho$ increases, which mirrors the results in \cite{Wildhagen2022}.




\subsection{Comparing with the identification-based STC}
 We tested the indirect approach, namely the identification-based STC,  described in Remark~\ref{indirect} on the multi-pendulum system.
The indirect approach comprises a system identification step using e.g., the subspace space system identification (n4sid) technique, followed by the model-based STC design.
Specifically, we first estimated a discrete-time state-space model for the multi-pendulum system using the n4sid toolbox in MATLAB based on the same set of data in Sec.~\ref{simudata}.
Theoretical analysis of the n4sid identification algorithm can be found in \cite{VanOverschee1996}.
The controller gain $K$ in \eqref{controller} and the triggering matrix $\Phi$ in \eqref{trigger} were designed according to Thm.~\ref{thm3} for $\rho=10$, $\rho=80$, and $\rho=800$ as follows
\begin{align*}
	\rho &=10, K=[-14.3887~ -8.9605],\Phi=\left[\begin{matrix}
	29.6795&3.3634\\
	3.3634&1.6010
\end{matrix}\right]\\
\rho &=80, K=[-1.5706~ -1.6844],\;\,\Phi=\left[\begin{matrix}
	1.5081&0.0622\\
	0.0622&1.0853
\end{matrix}\right]\\
\rho &=800, K=[-1.5368~ -0.8696], \Phi=\left[\begin{matrix}
	1.9140&1.0371\\
	1.0373&1.6703
\end{matrix}\right].
\end{align*}

We used the same parameters and initial state for each pendulum as in Sec.\ref{simudata}. Fig.~\ref{fig4} plots the state trajectories of each pendulum. Clearly, leader-following consensus is achieved using the identification-based STC. 
Further, we reported the steady-state time for data lengths $\rho=10$, $\rho=80$, and $\rho=800$ under both the data-driven STC and the identification-based STC in Table~\ref{table1}. 
It can be seen that the steady-state time of  both approaches decreases as the data length increases. 
In particular, for small data length e.g., $\rho=10$, the identification-based approach converges to its steady-state slightly faster than the data-driven one by relying on an explicit system model, which is analogous to the observation made  in \cite{Krishnan2021}.
On the other hand, the data-driven STC with $\rho=80$ has the same steady-state time ($3$s) as the identification-based one using $\rho=800$. 
That is, as soon as there are sufficient data,
the data-driven STC outperforms the identification-based STC.
One can also see from Table~\ref{table1} that there is little change in steady-state time (from $t=3$s to $t=2$s)  for the data-driven STC when  $\rho$ increases from $80$ to $800$, demonstrating the robustness of the data-driven approach.

\begin{figure}[!htb]
	\centering
	\subfloat[]{\includegraphics[scale=0.4]{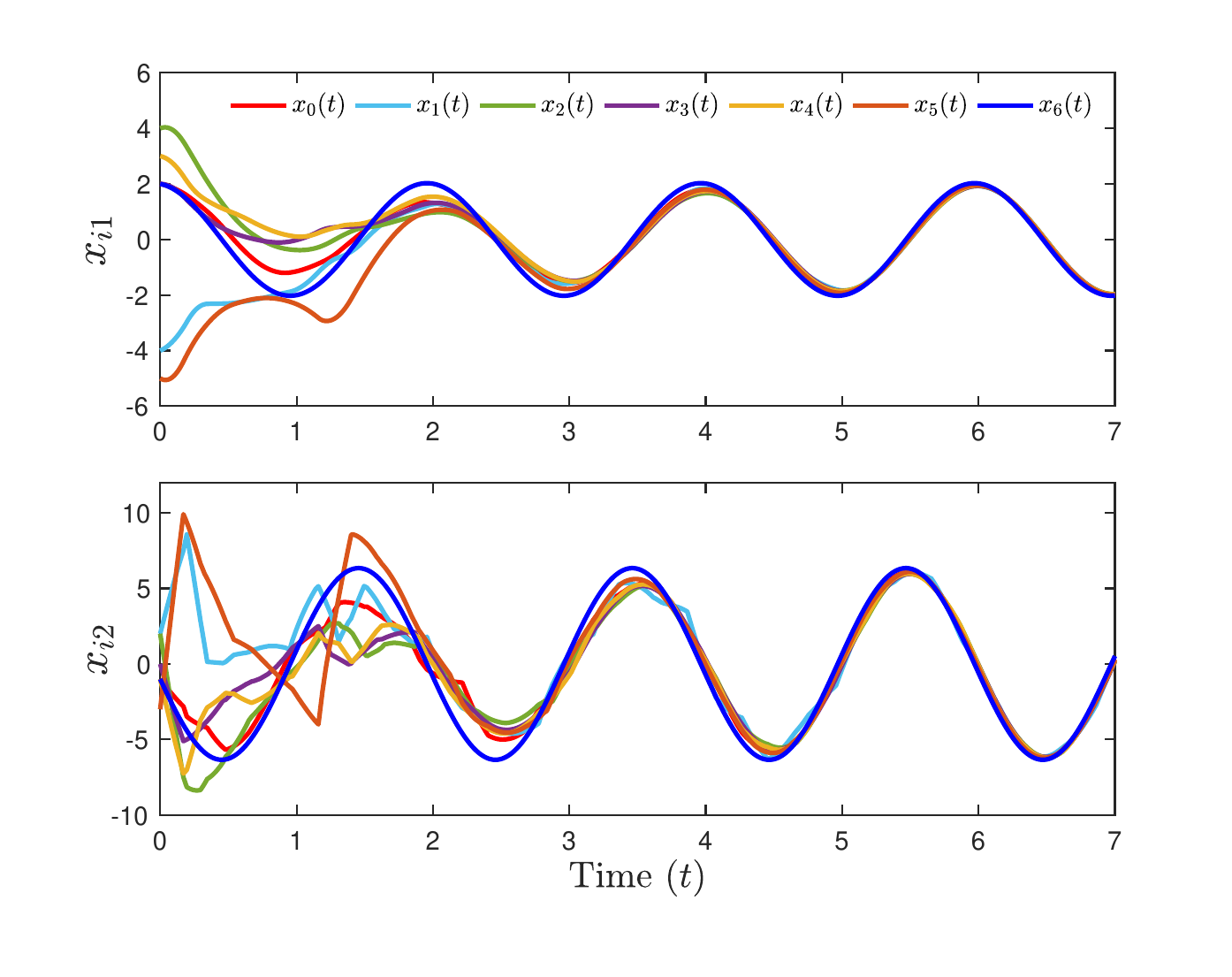}
		\label{fig4a}}\vspace{-4.1mm}
	\\
	\subfloat[]{\includegraphics[scale=0.4]{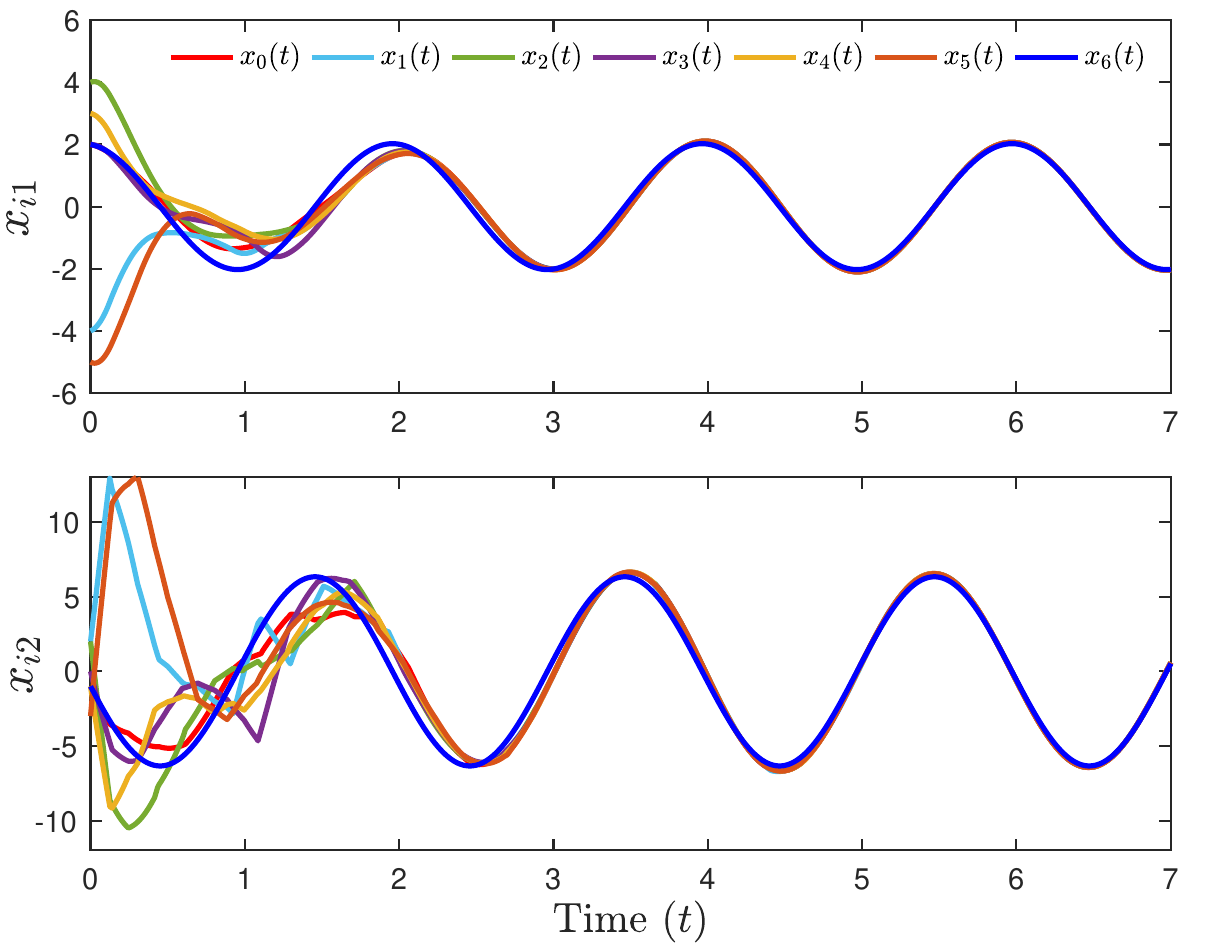}%
		\label{fig4b}}\vspace{-4.1mm}
	\\
	\subfloat[]{\includegraphics[scale=0.4]{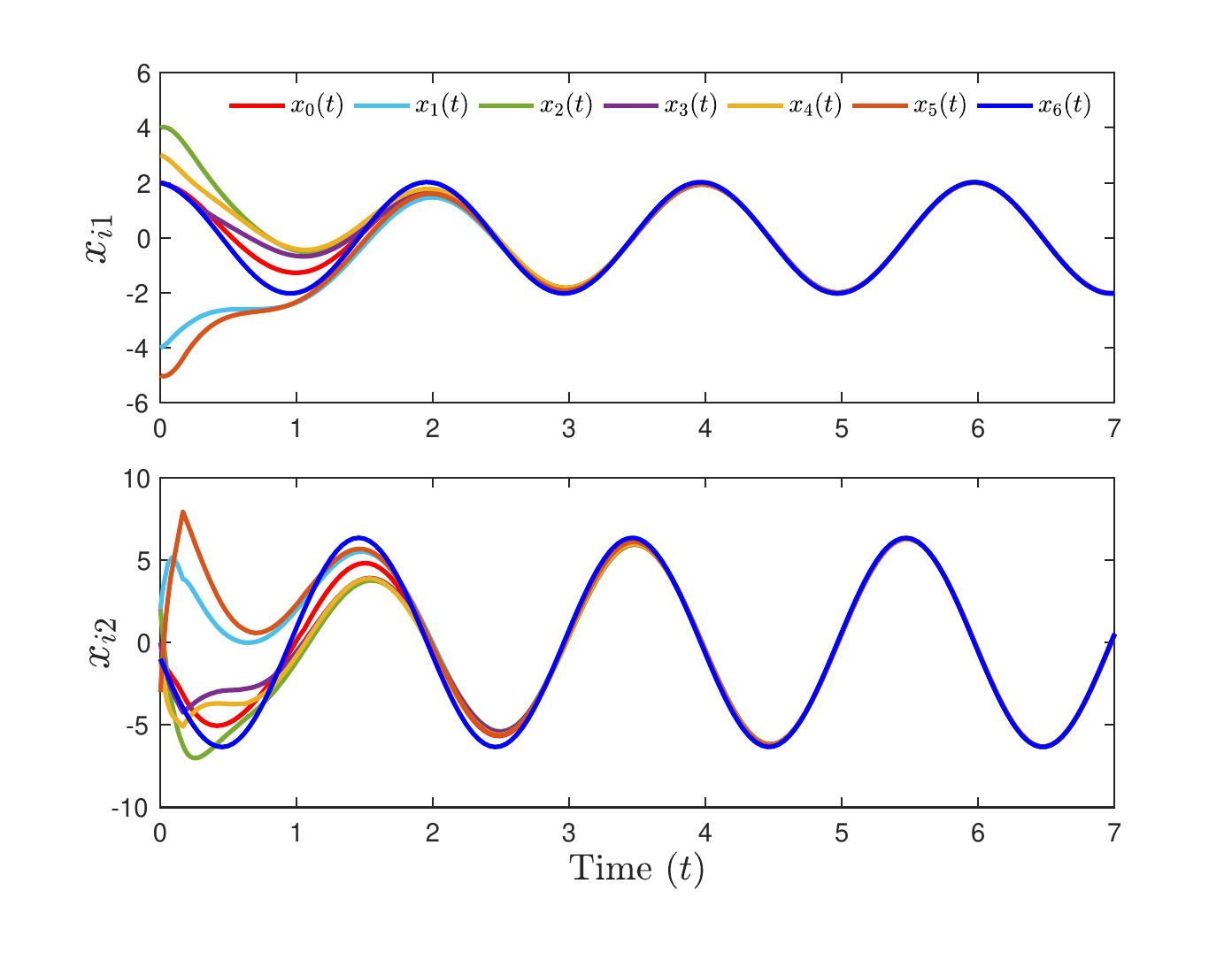}%
		\label{fig4c}}
	\caption{The state trajectories of each pendulum under the identification-based STC: (a) $\rho=10$; (b) $\rho=80$; (c) $\rho=800$.}
	\label{fig4}
\end{figure}

\begin{table}[!htb]
	\centering
	\caption{Steady-state time for different data lengths $\rho$ under data-driven STC and identification-based STC}
	\label{table1}
	\resizebox{\linewidth}{!}{
		\begin{tabular}{ccccc}
			\toprule 
			\multicolumn{2}{c}{The pre-collected state-input data length $\rho$}                    & 10 & 80 & 800                   \\ \midrule 
			\multirow{2}{*}{Steady-state time} & Data-driven STC    & 7  & 3  & \multicolumn{1}{c}{2} \\
			& Identification-based STC  & 6  & 5  & 3   \\
			\bottomrule 
	\end{tabular}	}
\end{table}

\begin{figure}[!htb]
	\centering
	\includegraphics[scale=0.4]{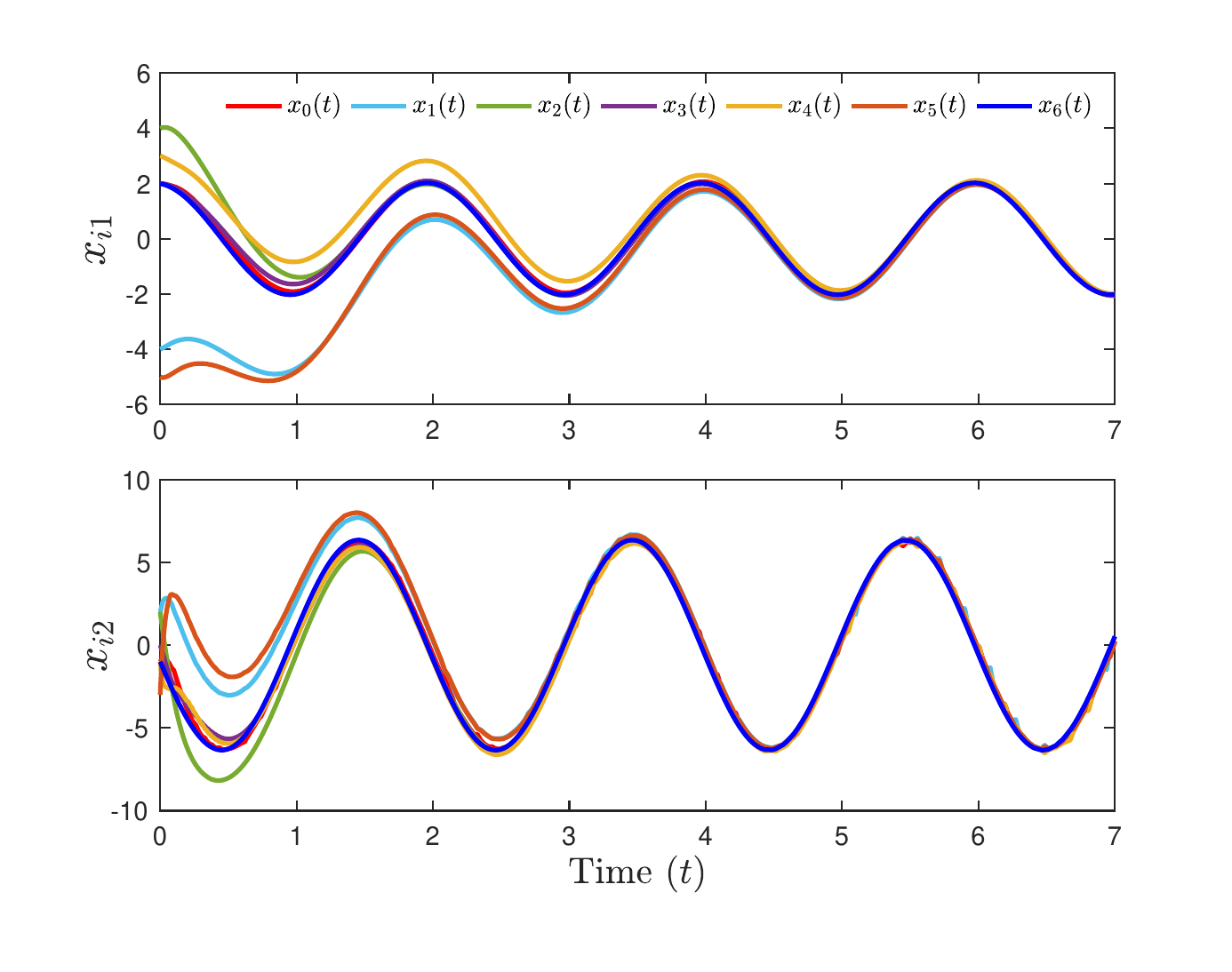}
	\caption{The state trajectories of each pendulum under the model-based STC.}
	\label{fig5}
\end{figure}


\subsection{Comparing with the model-based STC}
In what follows, we consider the multi-pendulum system \eqref{mas_perturb} with bounded disturbance, where the true system matrices $A_{\rm tr},B_{\rm tr}$ are assumed known.
Set $d_i(t)=\bar{d}_i(t)[\sin(3\pi t+\frac{\pi}{8}i),\,\sin(3\pi t+\frac{\pi}{8}i)]^\top$ for $i=1,\,2,\,\ldots,\,6$. 
To ensure a fair and meaningful comparison, select $B_d=E=0.01I$ and the disturbance bound $\bar{d}_i(t)=\bar{w}_i(t)=0.01$.
Taking  $\gamma=1$ and solving the LMI in Thm.~\ref{thm3}, the controller gain $K$ in \eqref{controller} and the triggering matrix $\Phi$ in \eqref{trigger3} were obtained as follows
$$
	K=[-1.0368\, -1.9844],\quad  \Phi=\left[\begin{matrix}8.4320  &  0.3945\\
		0.3945   & 0.8960\end{matrix}\right].
$$
The simulation results are shown in Fig.~\ref{fig5}. 
It is evident that all followers track the leader, validating the effectiveness and robustness of the proposed model-based $\mathcal{H}_{\infty}$-STC.

\subsection{Control performance vs. communication efficiency}
Next, we compare the control performance versus communication efficiency of different STC approaches as well as different triggering parameters.
To this end, let us define the following index to quantitatively assess the control performance
\begin{align*}
J^c(t)=\ln\!\Big(&\sum_{l=0}^{t}\sum_{i=0}^{N}\|x_i(l)\|^2_{Q_i}
+\|u_i(l)\|^2_{R_i}\\&+\|x_i(l)-x_0(l)\|^2_{Q_{i0}}\Big)
\end{align*}
for pre-selected, positive definite weight matrices $Q_i$, $R_i$, and $Q_{i0}$ for $i=1,\,2,\,\ldots,\,N$.
Intuitively, we expect the index to be small, implying that the control goal can be met at a low cost. Taking matrices $R_i=5I$, $Q_i=10I$, and $Q_{i0}=3I$, performance index trajectories $J^c(t)$ associated with the three approaches for different data lengths are presented in Fig.~\ref{fig6}. 
Clearly, as the data length increases, the control performance is enhanced for both the data-driven STC and the identification-based STC. Besides, when $\rho\ge 80$, the performance indices  of the data-driven STC are consistently lower than those of the identification-based one, which corroborates the control effectiveness of the proposed data-driven approach.

\begin{figure}[htb]
	\centering
	\includegraphics[scale=0.37]{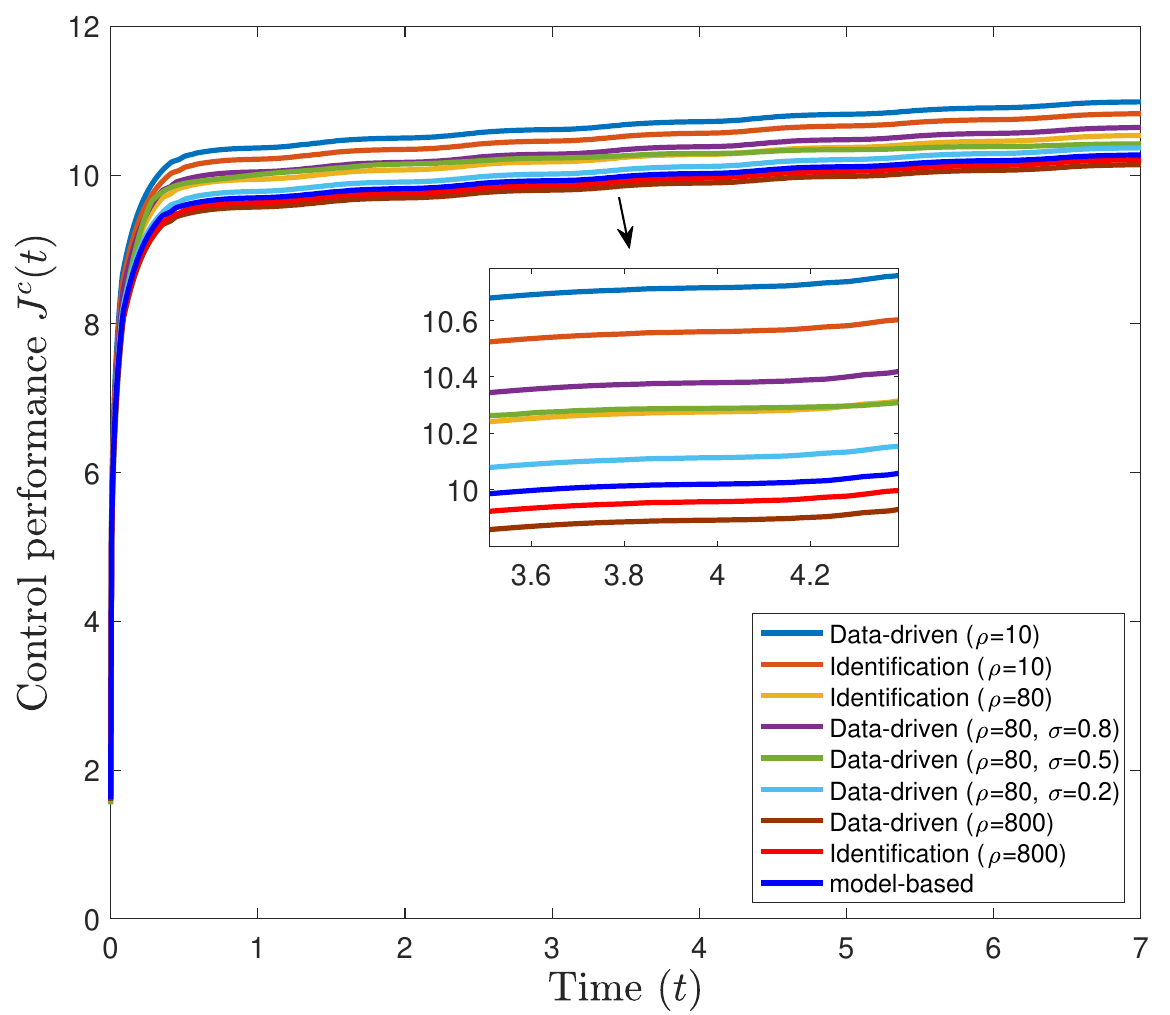}
	\caption{$J_c(t)$ under the data-driven STC, identification-based STC, and model-based STC with (a) $\rho=10$; (b) $\rho=80$; (c) $\rho=800$.}
	\label{fig6}
\end{figure}

\begin{figure}[!htb]
	\centering
	\includegraphics[scale=0.28]{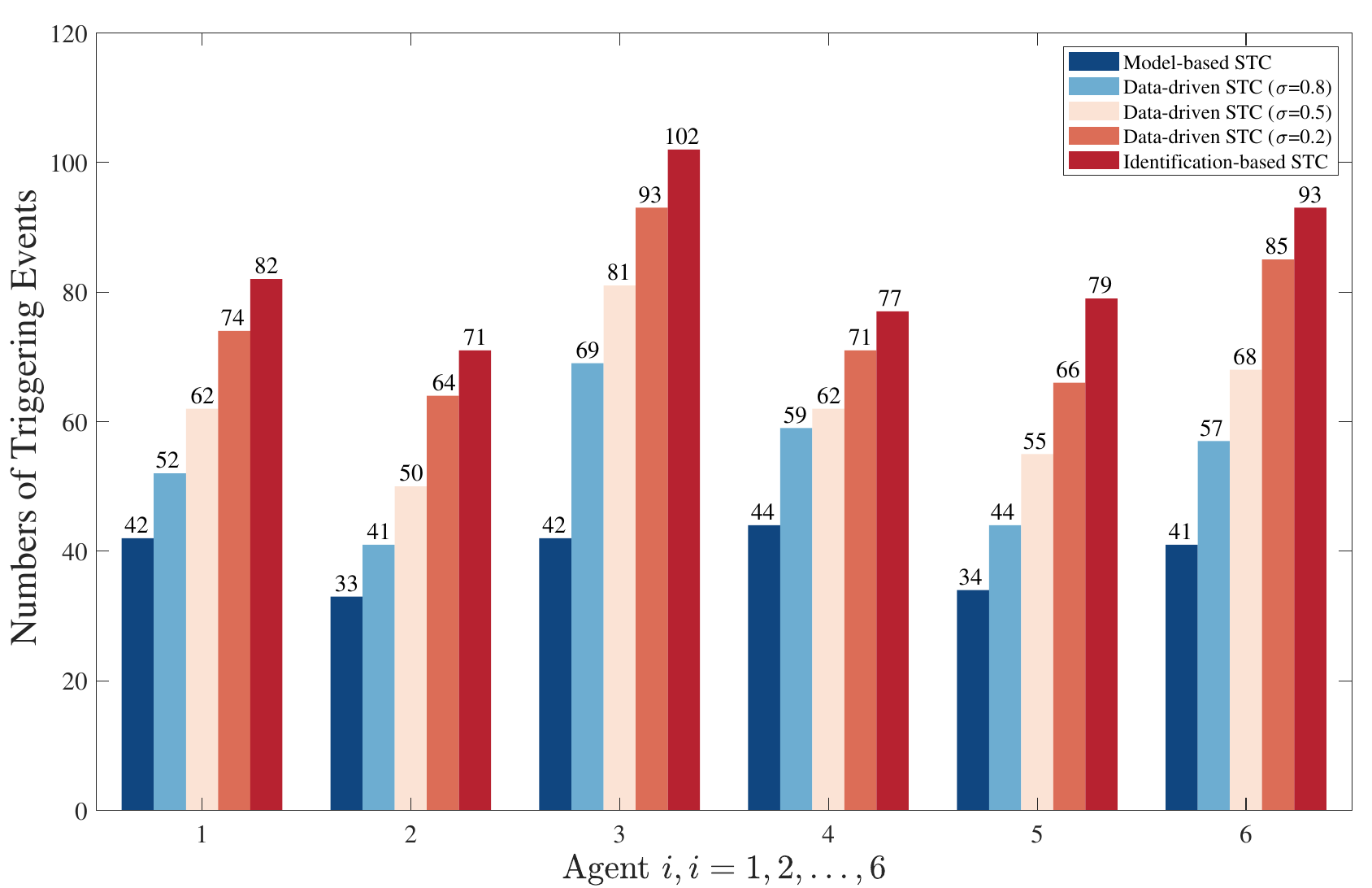}
	\caption{Number of triggering times under the model-based STC, data-driven STC, and system identification-based STC with $\rho=80$ and $\sigma=0.2$.}
	\label{fig7}
\end{figure}

In terms of communication efficiency, we report the numbers of triggering events in Fig.~\ref{fig7} for $\rho=80$ and $\sigma=0.2$. 
It can be observed that the data-driven STC exhibits a slightly higher triggering frequency than the model-based STC, while control performance at the approximate level are guaranteed for both approaches.
The main reason could be that the data-driven representation of lifted MASs incurs a degree of conservatism since it introduces the matrix $M$ and relies on an overapproximation of $\mathcal{W}_i^s$, as stated in Remark \ref{rmk:representation}. That is, the model-based stability condition comes with less conservatism than the data-driven one.
Moreover, in contrast to the data-driven approach, the identification-based scheme requires more communications among agents to achieve consensus. This increased communication frequency is mainly due to the additional operations involved in the identification-then-control procedure, which can potentially amplify the sensitivity to round-off errors \cite{Baggio2021}. Furthermore, the finite and noisy data lead to inaccurate identification results. It is also evident that the upper bound $\bar{d}_i(t)$ on the disturbance employed in the STM \eqref{trigger3} for predicting the next triggering time introduces conservatism.
As a result, the data-driven approach demonstrates superior communication efficiency compared to the identification-based one when dealing with finite, noisy data.
	
	We also compare the system performance and the number of triggerings for different values of $\sigma$, providing verification for the discussion in Remark \ref{rmk:parameter}. Combining the findings in Figs.~\ref{fig6} and \ref{fig7}, we conclude that the proposed data-driven STC achieves a favorable balance between control performance and communication efficiency by appropriately selecting the triggering parameter. 

\section{Conclusions}
\label{section5}
The self-triggered consensus control problem of unknown linear MASs has been investigated in this paper. A data-driven approach was proposed to design the self-triggering transmission policies and controllers directly from  pre-collected, noisy data.
A data-based criterion for co-designing the controller gain and triggering matrix was established to ensure closed-loop system stability.
To evaluate the robustness of data-driven STC and model-based STC against disturbance, a model-based condition was derived.
The effectiveness of the proposed data-driven STC was showed through numerical comparison with the system identification-based STC and the model-based $\mathcal{H}_{\infty}$-STC. 
Our results suggest that
the data-driven method exhibits a good trade-off between control performance and communication efficiency, 
while being more implementation-friendly and data-efficient than the system identification-based method,
especially when dealing with finite, noisy data.
Nonetheless, the presented STM builds on the data-driven representation of lifted MASs, which introduces some conservatism. A promising direction for future research is how to obtain a less conservative data-driven representation for MASs.
Moreover, our results address only the data-driven control problem of linear homogeneous leader-following MASs, which can be extended to general systems, including e.g., heterogeneous MASs or nonlinear MASs.



\bibliographystyle{IEEEtran}
\bibliography{yifeibibfile}

\end{document}